\newif\ifdraft
\newcommand{\Var}[1]{\ensuremath{\text{Var}(#1)}}
\DeclareMathOperator*{\openc}{\ensuremath{\lbrace\vert\_\vert\rbrace_\_}}
\DeclareMathOperator*{\opaenc}{\ensuremath{\lbrace\_\rbrace_\_}}
\DeclareMathOperator*{\oppriv}{\ensuremath{inv}}
\DeclareMathOperator*{\oppk}{\ensuremath{pk}}
\DeclareMathOperator*{\oppair}{\ensuremath{pair}}
\DeclareMathOperator*{\ovars}{\ensuremath{Vars}}
\DeclareMathOperator*{\odom}{\ensuremath{dom}}
\DeclareMathOperator*{\oimg}{\ensuremath{img}}
\newcommand*{\pair}[2]{\ensuremath{\oppair\left(#1,#2\right)}}
\newcommand*{\enc}[2]{\ensuremath{\lbrace\vert #1 \vert\rbrace_{#2}}}
\newcommand*{\aenc}[2]{\ensuremath{\left\lbrace#1\right\rbrace_{#2}}}
\newcommand*{\sig}[2]{\ensuremath{\left\lbrace#1\right\rbrace_{#2}^{\text{sig}}}}
\DeclareMathOperator*{\opsig}{\ensuremath{\sig\_\_}}
\newcommand*{\priv}[1]{\ensuremath{\oppriv\left(#1\right)}}
\newcommand*{\pk}[1]{\ensuremath{\oppk\left(#1\right)}}
\newcommand*{\vars}[1]{\ensuremath{\ovars\left(#1\right)}}
\newcommand*{\dom}[1]{\ensuremath{\odom\left(#1\right)}}
\newcommand*{\img}[1]{\ensuremath{\oimg\left(#1\right)}}
\newcommand*{\card}[1]{\ensuremath{\lvert#1\rvert}}
\newcommand*{\snd}[2]{\ensuremath{!_{#1} #2}}
\newcommand*{\rcv}[2]{\ensuremath{?_{#1} #2}}
\newcommand*{\channel}[3][]{\ensuremath{{#2} \ifthenelse{\equal{#1}{}}{\rightharpoonup}{\stackrel{#1}{\rightharpoonup}} {#3}}}
\newcommand*{\Universe}{\ensuremath{\mathcal{T}}}
\newcommand*{\UniverseG}{\ensuremath{\mathcal{T}_g}}
\newcommand*{\ConstrSys}[1]{\ensuremath{\mathcal{#1}}}
\newcommand*{\UniAt}[1]{\ensuremath{\mathcal{#1}}}
\newcommand*{\UniVar}[1]{\ensuremath{\mathcal{#1}}}
\newcommand*{\set}[1]{\ensuremath{\left\lbrace #1 \right\rbrace}}
\newcommand*{\call}[1]{\ensuremath{\mathcal{#1}}}
\newcommand{\Variables}{\call{X}\xspace}
\def\pairesymbol#1#2{\ensuremath{ #1.#2}}
\def\paireaux #1.#2;{%
\ifx @#2@% pas de deuxi{\`e}me argument
#1%
\else% la liste continue
\expandafter\pairesymbol\expandafter{#1}{\paireaux #2;}%
\fi}
\newcommand\paire[1]{\paireaux #1.;}
\DeclareMathOperator*{\parent}{\ensuremath{rel}}
\definecolor{aslanBG}{rgb}{0.933, 1, 0.94} 
\definecolor{xmlBG}{rgb}{ 1, 0.94, 0.933} 
\lstdefinelanguage{ASLan}{morekeywords={=>,section~typeSymbol,section,signature,types,equations,inits,rules:,goals,intruder,hornClauses,initial_state,:=,hc,:=,:-,equal,leq,not,=[exists,],goal,attack_state,and,or,implies,forall,exists,X,Y,F,O, G,H,U,R,S,>,->,*,step},morecomment=[l]{\%}, breaklines=true, breakatwhitespace=true,  basicstyle=\small\normalfont\ttfamily, backgroundcolor=\color{aslanBG}}
\lstdefinelanguage{plain}{morecomment=[l]{\%}, breaklines=true, breakatwhitespace=true, escapeinside={`}{'}, basicstyle=\small\normalfont\ttfamily}
 \definecolor{shadecolor}{rgb}{0,0,0}
\newcommand{\fs}{\UniAt{F}\xspace}
\newcommand{\vs}{\UniVar{X}\xspace}
\newcommand{\as}{\UniVar{A}\xspace}
\newcommand{\css}{\ConstrSys{S}\xspace}
\DeclareMathOperator*{\opmgu}{mgu}
\newcommand*{\mgu}[1]{\opmgu\left(#1\right)}
\DeclareMathOperator*{\opcoeur}{\heartsuit}
\NewDocumentCommand{\coeur}{O{}m}{\opcoeur_{#1}\left(#2\right)}
\newcommand*{\smallpar}[1]{\textsc{#1}}
\newcommand*{\lab}[1]{\ensuremath{\coeur{#1}}}
\DeclareMathOperator*{\opRHSs}{R}
\newcommand*{\RHSset}[2]{\ensuremath{\opRHSs_{#1}\left(\ifthenelse{\equal{#2}{}}{\card{#1}}{#2}\right)}}
\newcommand*{\labelset}[2]{\ensuremath{\lab{#1}\ifthenelse{\equal{#2}{}}{}{\left[:#2-1\right]}}}
\def\Sub#1{\ensuremath{\text{\rm Sub}(#1)}}
\def\Next#1#2{\ensuremath{\text{\rm Next}_{#1}(#2)}}
\def\Var#1{\ensuremath{\text{\rm Var}(#1)}}
\newtheorem{corollary}{Corollary}
\newtheorem{theorem}{Theorem}
\newtheorem{lemma}{Lemma}
\newtheorem{definition}{Definition}[section]
\NewDocumentCommand{\hsend}{om}{\ensuremath{\rcv{}{#2}}}
\NewDocumentCommand{\anyrule}{o}{\ensuremath{\mathrel{\rlap{\hspace{.45em}$*$}\rightarrow}}}
\NewDocumentCommand\stdrule{o}{\ensuremath{\rightarrow}}
\NewDocumentCommand\hnoncereceive{om}{\ensuremath{\stdrule #2}}
\NewDocumentCommand{\possiblyequal}{}{\ensuremath{=_{?}}}
\NewDocumentCommand{\repl}{mm}{\ensuremath{#1|_{#2}}}
\NewDocumentCommand{\position}{mm}{\ensuremath{#1[{#2}]}}
\NewDocumentCommand{\termheight}{m}{\ensuremath{\operatorname{ht}\left(#1\right)}}
\newcommand*{\forbid}[2]{\ensuremath{\natural_{#1}#2}}
\newcommand*{\scss}{\Sub{{\css}}}
\NewDocumentCommand{\pscss}{O{T}}{\ensuremath{\Sub{#1}}}
\NewDocumentCommand{\byDec}{O{}O{}}{\ensuremath{\operatorname{byDec}^{#1}_{#2}}}
\NewDocumentCommand{\ruleDec}{O{}O{}m}{\ensuremath{\operatorname{\rho}^{#1}_{#2}\left(#3\right)}}
\newcommand{\MathFunction}[3]{\ensuremath{\text{\rm #2}_{#1}(#3)}}
\newcommand{\previous}[1]{\MathFunction{\css}{prev}{#1}}
\newcommand*{\Der}[1]{\MathFunction{}{Der}{#1}}
\newcommand{\In}[1]{\MathFunction{}{In}{#1}}
\newcommand{\Out}[1]{\MathFunction{}{Out}{#1}}
\newcommand{\condset}[2]{\set{#1\,:\,#2}}
\newcommand\Nonces{\ensuremath{\UniAt{C}_{\text{\rm med}}}}
\NewDocumentCommand{\cmnt}{O{T}m}{\marginpar{\footnotesize  #1: #2}}
\newcommand{\unif}{\ensuremath{=_{?}}}
\Crefname{proposition}{Proposition}{Propositions}
\crefname{proposition}{proposition}{propositions}
\Crefname{corollary}{Corollary}{Corollaries}
\crefname{corollary}{corollary}{corollaries}
\Crefname{definition}{Definition}{Definitions}
\crefname{definition}{definition}{definitions}
\Crefname{algocf}{Algorithm}{Algorithms}
\crefname{algocf}{algorithm}{algorithms}
\thanks{This work is  supported by FP7 AVANTSSAR \cite{avantssarproj} and FP7 NESSoS \cite{nessosproj} projects.}
 \thanks[inria]{INRIA Nancy Grand Est,   France.  Email: {\{rusi, turuani\}@inria.fr}}
 \thanks[irit]{IRIT, Universit{\'e} de Toulouse,  France.  Email: {ychevali@irit.fr}}
 \thanks[snt]{SnT, Universit{\'e} du Luxembourg, Luxembourg.  Email: {tigran.avanesov@uni.lu}}
\begin{document}

\makeRR

\section{Introduction}

\subsection{Context}
Trust and security management in distributed frameworks is known to be
a non-trivial critical issue. It is particularly challenging in
Service Oriented Architecture where services can be discovered and
composed in a dynamic way.  Implemented solutions should meet the
seemingly antinomic goals of openness and flexibility on one hand and
compliance with data privacy and other regulations on the other hand.
We have demonstrated in previous
works~\cite{SCC-WSCA08,DBLP:conf/fmco/ChevalierMR10,DBLP:conf/esorics/AvanesovCMR11}
that functional agility can be achieved for services with a
message-level security policy by providing an automated service
synthesis algorithm. It resolves a system of deducibility constraints
by synthesizing a \emph{mediator} that may adapt, compose and analyze
messages exchanged between client services and having the
functionalities specified by a goal service. It is complete as long as
the security policies only apply to the participants in the
orchestration and not on the synthesized service nor on who is able to
participate. However security policies often include such
\emph{non-deducibility} constraints on the mediator.  For instance an
organisation may not be trusted to efficiently protect the customer's
data against attackers even though it is well-meaning.  In this case a
client would require that the mediator synthesized to interact with
this organization must not have direct access to her private data,
which is an effective protection even in case of total compromise.
Also it is not possible to specify that the mediator enforces
\textit{e.g.}  dynamic separation of duty, \textit{i.e.}, restrictions on
the possible participants based on the messages exchanged. 
%\marginpar{based on .. $\to$ from some point ?}

Since checking whether a solution computed by our previous algorithm %\marginpar{since .. complete \\maybe too technical here?}
satisfies the non-deducibility constraints is not complete, 
we propose in this paper to solve during the automated synthesis of the mediator
both deducibility and non-deducibility constraints. The
former are employed to specify a mediator that satisfies the
functional requirements and the security policy on the messages
exchanged by the participants whereas the latter are employed to
enforce a security policy on the mediator and the participants to the
orchestration.

\paragraph{Original contribution.}
We have previously proposed decision
procedures~\cite{SCC-WSCA08,DBLP:conf/fmco/ChevalierMR10,DBLP:conf/esorics/AvanesovCMR11}
for generating a mediator from a high-level specification with
deducibility constraints of a goal service.  In this paper we extend
the formalism to include non-deducibility constraints in the
specification of the mediator and provide a decision procedure
synthesizing a mediator for the resulting constraint systems.
 
\paragraph{Related works.}
In order to understand and anticipate potential flaws in complex
composition scenarios, several approaches have been proposed for the
formal specification and analysis of secure
services~\cite{avantssar-tacas-court,CostaDM11}.  Among the works
dedicated to trust in multi-agent systems, the models closest to ours
are~\cite{HerzigLogicTrust,lorini} in which one can express that an
agent trusts another agent in doing or forbearing of doing an action
that leads to some goal.
To our knowledge no work has previously considered the automatic
orchestration of security services with policies altogether as ours.
However there are some interesting related attempts to analyze
security protocols and trust management~\cite{Martinelli05,FrauD11}.
In ~\cite{Martinelli05} the author uniformly models security protocols 
and  access control based on trust management. 
The work  introduces an elegant approach to model automated trust negotiation. 
We also  consider an integrated framework for protocols  and  policies but  in our case 
$i)$ policies can  be explicitly  negative such as non-disclosure policies 
and separation-of-duty $ii)$ we propose  a decision procedure for 
the related trust negotiation problem $iii)$ we do not consider 
indistinguishability properties. In ~\cite{FrauD11} 
security protocols are combined with authorization logics that can be expressed with acyclic Horn clauses.
The authors encode the derivation of authorization predicates (for a service) 
as  subprotocols and  can reuse in that way the constraint solving algorithm  from 
\cite{MillenShmatikov2001} to obtain  a decision procedure. 
In our case we consider more general  intruder theories (subterm convergent ones) but focus on negation. 
We conjecture that our approach applies to their authorization policies too.

Our decision procedure for general (negative and positive) constraints
extend ~\cite{CorinES06} where negative constraints are limited to
have ground terms in right-hand sides, and the deduction system is
Dolev-Yao system ~\cite{DY83}, a special instance of the subterm deduction systems
we consider here.  In~\cite{kusters} the authors study a class of
contract signing protocols where some very specific Dolev-Yao negative
constraints are implicitly handled.
%(in particular only Dolev-Yao standard case is considered). 

Finally one should note that the non deducibility 
constraints we consider tell that some data cannot be disclosed \emph{globally} 
but they cannot express finer-grained privacy or information leakage 
notions relying on probability such as for instance differential privacy. %\marginpar{see review 2}

\paragraph{Paper organization.} 
In Subsection \ref{subsec:loan} we introduce a motivating banking application 
and sketch  our approach to obtain a mediator service. To our knowledge 
this application is out of the scope of alternative automatic methods. 
In Section  \ref{sec:constraint} we present our formal setting.
A deduction system  (Subsection \ref{subsec:deduction}) describes the abilities of the mediator to
process the messages. 
The mediator synthesis problem is reduced to the 
resolution of constraints that are defined in Subsection \ref{sec:constraint}.
In Section  \ref{sec:subterm} we recall the class of \emph{subterm deduction systems} 
and their properties. These systems have nice properties that allow us 
to decide in Section \ref{sec:deciding} the satisfiability of 
deducibility constraints even with negation.  
Finally we conclude in Section~\ref{sec:conclusion}.

\subsection{Synthesis of a Loan Origination Process (LOP)}
\label{subsec:loan}
We illustrate how negative constraints are needed to express
elaborated policies such as Separation of Duty by a classical loan
origination process example.  Our goal is to synthesize a mediator
that selects two bank clerks satisfying the Separation of Duty policy
to manage the client request.  Such a problem is solved automatically
by the decision procedure proved in the following sections. Let us
walk through the specification of the different parts of the
orchestration problem.
% \pair{a}{b}
% \enc{a}{k} -- symmetric encryption
% \aenc{a}{k} -- assymetric encryption
% \priv{k} -- private key
% \sig{a}{\priv{k}} -- signature

\paragraph{Formal setting.} Data are represented by first-order terms
defined on a signature that comprises binary symbols for symmetric and
assymetric encryptions (resp.
\ensuremath{\openc}, \ensuremath{\opaenc}), signature
(\ensuremath{\opsig}), and pairing (\ensuremath{\oppair}). Given a
public key $k$ we write \priv{k} its associated private key.  For
example \sig{a}{\priv{k}} is the signature of $a$ by the owner of
public key $k$.  For readability we write $\paire{a.b.c}$ a term
\pair{a}{\pair{b}{c}}.  The binary symbol $\parent$ expresses that
two agents are related and is used for defining a Separation of
Duty policy. A unary symbol $g$ is employed to designate participants
identity in the ``relatives'' database.

\paragraph{Client and  clerks.}
The client and the clerks are specified by services with a security
policy, specifying the cryptographic protections and the data and
security tokens, and a business logic that specify the sequence in
which the operations may be invoked. These are compiled into a sequence of protected messages each
service is willing to follow during the orchestration (Fig.~\ref{fig:lop:clerk} and~\ref{fig:lop:client}).

Client $C$  wants to ask for a loan from a service $P$, but 
for this he needs to get an approval from two banking clerks.
He declares his intention by sending to mediator $M$ a signed by him message 
containing service name $P$ and the identity of the client $g(C)$.
The mediator should send back the names of two clerks $A$ and $B$ who will evaluate his request.
The client then sends to each clerk a request containing amount $Amnt$, his name $C$ and a fresh key $N_k$ 
which should be used to encrypt decisions. 
Each request is encrypted with a public key of the corresponding  clerk ($pk(A)$ or $pk(B)$).
Then the mediator must furnish the decisions ($R_a$ and $R_b$) of two clerks each encrypted with the proposed key $N_k$ and also their signatures.
Finally, the client uses these tokens to ask his loan from $P$, where $pk(P)$ is a public key of $P$.

Clerk $A$ receives a request to participate in a LOP which is conducted by mediator $M$.
If he accepts, he returns his identity and public key. 
Then Clerk receives the client's request for a loan to evaluate: 
amount Amnt, client's name $C$ and a temporary key $K$ for encrypting his decision.
The last is sent back together with a signature certifying the authenticity of this decision
on the given request.

The client's non-disclosure policy is given in
Fig.~\ref{fig:lop:client} and is self-explanatory.  Let us explain the
services' non-disclosure policy.  The Clerk's decision (its last
message) should be unforgeable, thus, it should not be known by the
Mediator before it was sent by the Clerk (first non-disclosure
constraint of Fig.~\ref{fig:lop:clerk}).  The role clerk played by $A$
can be used by the mediator only if the constraint $\forbid{}{g(A)}$
is satisfied, showing that $A$ is not a relative with any other actor
of the protocol, as client and the other clerk (second non-disclosure
constraint of Fig.~\ref{fig:lop:clerk}).

\paragraph{Goal service.} In contrast with the
other services and clients, the goal service is only described in
terms of possible operations and available initial data.
% \vspace*{-1ex}
\begin{description}
\item \emph{Initial data.}  Beside his private/public keys and the
  public keys of potential partners (\textit{e.g.} \pk{P}) the goal service has
  access to a relational database $\parent(g(a),g(c)), \parent(g(b),g(c)), \ldots$
  for storing known existing relations between agents to be checked
  against conflict of interests.
\item \emph{Deduction rules.}  The access to the database as well
  as the possible operations on messages are modeled by a set of
  deduction rules (formally defined later). We anticipate on the rest
  of this paper, and present the rules specific to this case study
  grouped into composition and decomposition rules in
  Fig.~\ref{fig:lop:deduction}.
\end{description}

\begin{figure}[tbp]
  \begin{minipage}[t]{.45\linewidth}
    \noindent\textbf{Clerk's ($A$) communications:}\footnotemark\par
    \begin{tabular}{r@{\ensuremath{~\Rightarrow~}}l@{~:~}p{0.7\linewidth}}
      $*$ & A & request.M\\
      A & M & \paire{g(A).\pk{A}}\\
      M & A &\aenc{\paire{\ensuremath{\text{\sl Amnt}}.C.K}}{\pk A} \\
      A & M & $m_1(A,Resp_A,K,C,\ensuremath{\text{\sl Amnt}})$ \\
    \end{tabular}\par
    \vspace*{1ex}
    
    \noindent\textbf{Non-disclosure constraints:}
    
%     \vspace*{-1em}
    \begin{enumerate}
    \item $M$ cannot deduce the fourth message before it is sent by $A$.
    \item $M$ cannot deduce $g(A)$ before the second message is sent by $A$.
    \end{enumerate}
%    \vspace*{-1em}

    \caption{Clerk's communications and non-disclosure constraints}
    \label{fig:lop:clerk}
  \end{minipage}\qquad%
  \begin{minipage}[t]{.48\linewidth}
    \noindent\textbf{Client's ($C$) communications:}\footnotemark[\value{footnote}]\par
    \begin{tabular}{r@{\ensuremath{~\Rightarrow~}}l@{~:~}p{0.7\linewidth}}
      C & M & \sig{\paire{g(C).loan.P}}{\priv{pk(C)}}\\
      M & C & \paire{A.B}\\
      C & M &
      %$\aenc{\paire{\ensuremath{\text{\sl Amnt}}.C.N_k}}{\pk{A}},\aenc{\paire{\ensuremath{\text{\sl Amnt}}.C.N_k}}{\pk{B}}$\\ 
       $\paire{m_2(A,\ensuremath{\text{\sl Amnt}}).m_2(B,\ensuremath{\text{\sl Amnt}})}$\\
      M & C & \paire{m_3(A,R_a).m_3(B,R_b)}\\
%       M & C & \paire{m_1(A,R_a,N_k,C, Amnt).m_1(B,R_b,N_k,C, Amnt)}\\
      C & P & $m_4(\pk{P},A,B,R_a,R_b)$\\
    \end{tabular}\par
    \vspace*{1ex}

    \noindent\textbf{Non-disclosure constraints:}
    
%     \vspace*{-1em}
    \begin{enumerate}
    \item M cannot deduce the amount $Amnt$.
    \item M cannot deduce $A$'s decision $R_a$.
    \item M cannot deduce $B$'s decision $R_b$.
    \end{enumerate}
%    \vspace*{-1em}

    \caption{Client's Communications and non-disclosure constraints}
    \label{fig:lop:client}
  \end{minipage}
  
  \vspace*{1em}

  \begin{minipage}{1.0\linewidth}
    \centering
    \begin{tabular}{rcl|rcl@{\hspace*{1em}}rcl}
      \multicolumn{3}{c|}{\bf Composition rules}  & \multicolumn{6}{c}{\bf Decomposition rules}  \\
      \hline
      $x, y$ &$\to $&${\pair{x}{y}}  $&   $     \pair{x}{y} $&$\to $&$ x $  &&&\\
      &     &              &         $\pair{x}{y}$ &$\to$ & $y$ & $x,\parent(x,y)$ &$\to$ & $y$\\
      $x, y$ &$\to$ &${\enc{x}{y}}  $ &   $ y, \enc{x}{y} $ &$\to $& $x$ &  $y,\parent(x,y)$ &$\to$ & $x$\\
      $x, y$ &$\to$ &${\aenc{x}{y}}$   &   $\priv y, \aenc{x}{y} $ &$\to $& $x$&&&\\
      $x, \priv{y}$  & $\to$ & ${\sig{x}{\priv{y}}}$   & $y,{\sig{x}{\priv{y}}}$ &$\to $& $x$ &&&\\
    \end{tabular}
    \caption{Deduction system for the LOP example.}
    \label{fig:lop:deduction}
  \end{minipage}
\end{figure}

\paragraph{Mediator synthesis problem.}
In order to communicate with the services (here the client, the clerks and the service $P$), a mediator has to 
satisfy a sequence of constraints expressing that \textit{(i)} each message $m$  expected by a service (denoted $\rcv{}{m}$) 
can be deduced from all the previously sent messages $m'$ (denoted $\snd{}{m'}$) and the initial knowledge and 
\textit{(ii)} each message $w$ 
that should not be known or disclosed (denoted  $\forbid{}{w}$ and called negative constraint) 
is not deducible. 

The orchestration problem consists in finding a satisfying  
interleaving of the  constraints imposed by each service. 
For instance, clerk's and client's constraints  extracted from 
Fig.~\ref{fig:lop:clerk} and~ Fig.~\ref{fig:lop:client} are:
$$
\left\lbrace
  \begin{array}{rl}
    Client(C)\stackrel\Delta=&\snd{M}{\sig{\paire{g(C).loan.P}}{\priv{K_C}}} \  \rcv{M}{\paire{A.B}} \  \snd{M}{\paire{m_2(A,\ensuremath{\text{\sl Amnt}}).m_2(B,\ensuremath{\text{\sl Amnt}})}} \\
    &\rcv{M}{\paire{m_3(A, R_a).m_3(B, R_b)}} \  \forbid{M}{\text{\sl Amnt}} \ \forbid{M}{R_A} \  \forbid{M}{R_B} \   \\
     & \snd{P}{m_4(\pk{P},A,B,R_a,R_b)} \\
    Clerk(A) \stackrel\Delta= & \rcv{}{request.M} \ \forbid{M}{g(A)} \  \snd{M}{\paire{g(A).\pk{A}}}  \ \rcv{M}{\aenc{\paire{\ensuremath{\text{\sl Amnt}}.C.K}}{\pk A}} \\
    % $\forbid{}{\sig{h(\ensuremath{\text{\sl Amnt}}.id(Clt).\enc{rep_A}{K}).\enc{rep_A}{K}}{\priv{k_A}}}$\\
%     & \snd{M}{\sig{\paire{h(\paire{A.\ensuremath{\text{\sl Amnt}}.C.Resp_A})}}{\priv{A}}.\enc{Resp_A}{K}}\\
     & \forbid{M}{m_1(A,Resp_A,K,C,\ensuremath{\text{\sl Amnt}})} \ \snd{M}{m_1(A,Resp_A,K,C,\ensuremath{\text{\sl Amnt}})}\\
  \end{array}
\right.
$$

If it exists our procedure outputs a solution which can be translated automatically into a mediator. 
Note, for example, that without the negative constraint $\forbid{}{g(A)}$  
a synthesized mediator might accept any  clerk identity and 
that could violate the Separation of Duty policy.

 \footnotetext{We
  have employed the following abbreviations for messages:\par
  \centerline{\ensuremath{\scriptscriptstyle\left\lbrace
        \begin{array}{rcl}
          m_1(A,Resp,K,Ct,S) &=& \sig{\paire{h(\paire{ A . S . Ct . Resp })}}{\priv{pk(A)}}.\enc{Resp}{K} \\
          m_2(A,S) &= & \aenc{\paire{S.C.N_k}}{\pk{A}} \\
%           m_3(A,R)& =&  \sig{\paire{h(\paire{A.\ensuremath{\text{\sl Amnt}}.C.R})}}{\priv{A}}.\enc{R}{N_k} \\
          m_3(A,R)& =&  m_1(A,R,N_k,C,\text{\sl Amnt})\\
          m_4(K_0,A,B,R_1,R_2)&=& \paire{\aenc{\paire{\ensuremath{\text{\sl Amnt}}.C.A.R_1.B.R_2}}{K_0}.m_3(A,R_1).m_3(B,R_2)}\\
        \end{array}
      \right.}}}

\section{Derivations and constraint systems}
\label{sec:constraint}
In our setting messages are terms generated or obtained
according to some elementary rules called \emph{deduction rules}.  A
\emph{derivation} is a sequence of deduction rules applied by a
mediator to build new messages. The goal of the synthesis is specified
by a \emph{constraint system}, \textit{i.e.} a sequence of terms
labelled by symbols \ensuremath{!},\ensuremath{?}  or
\ensuremath{\natural}, respectively \emph{sent}, \emph{received}, or
\emph{unknown} at some step of the process.

\subsection{Terms and substitutions}

Let $\vs$ be a set of \emph{variables}, $\fs$ be a set of
\emph{function
  symbols} % with non-zero arity % (e.g., ones presented in \Cref{tab:fsDY})
and $\UniAt{C}$ a set of \emph{constants}. The set of \emph{terms}
$\Universe$ is the minimal set containing $\vs$, $\UniAt{C}$ and if
$t_1,\dots,t_k \in \Universe$ then $f(t_1,\dots,t_k)\in\Universe$ for
any $f\in\fs$ with arity $k$.
% signature, deduction system, etc
The set of \emph{subterms} of a term $t$ is denoted $\Sub{t}$ and is
the minimal set containing $t$ such that $f(t_1, \dots,
t_n)\in\Sub{t}$ implies $t_1, \dots,t_n\in\Sub{t}$ for $f\in\fs$.  We
denote \vars{t} the set $\vs\cap\Sub{t}$%
% of variables occurring in a term $t$
.  
A term $t$ is \emph{ground} is $\vars{t}=\emptyset$. We
denote $\UniverseG$ the set of ground terms.  

A \emph{substitution}
$\sigma$ is an idempotent mapping from $\vs$ to $\Universe$. It is ground if it is
a mapping from $\vs$ to $\UniverseG$.  The application of a
substitution $\sigma$ on a term $t$ is denoted $t\sigma$ and is equal
to the term $t$ where all variables $x$ have been replaced by the term
$x\sigma$.  We say that a substitution $\sigma$ is \emph{injective} on
a set of terms $T$, iff for all $p,q\in T$ $p\sigma=q\sigma$ implies
$p=q$.  The \emph{domain} of $\sigma$ (denoted by $\dom{\sigma}$) is
set: $\condset{x\in\vs}{x\sigma\neq x}$. The \emph{image} of $\sigma$
is $\img{\sigma}=\condset{x\sigma}{x\in\dom{\sigma}}$.  Given two
substitutions $\sigma,\delta$, the substitution $\sigma\delta$ has for
domain $\dom{\sigma}\cup\dom{\delta}$ and is defined by
$x\sigma\delta=(x\sigma)\delta$.  If
$\dom{\sigma}\cap\dom{\delta}=\emptyset$ we write $\sigma\cup\delta$
instead of $\sigma\delta$.

A \emph{unification system} $U$ is a finite set of equations $\set{p_i
  \possiblyequal q_i}_{1\le i\le n}$ where $p_i, q_i \in \Universe$. A
substitution $\sigma$ is an \emph{unifier} of $U$ or equivalently
satisfies $U$ iff for all $i=1,\dots,n$, $p_i\sigma=q_i\sigma$.  Any
satisfiable unification system $U$ admits a \emph{most general
  unifier} $\mgu{U}$, unique modulo variable renaming, and such that
for any unifier $\sigma$ of $U$ there exists a substitution $\tau$
such that $\sigma=\mgu{U}\tau$. Wlog we assume in the rest of this
paper that $\vars{\img{\mgu{U}}} \subseteq \vars{U}$, \textit{i.e.},
the most general unifier does not introduce new variables.

A \emph{sequence} $s$ is indexed by $[1,\ldots,n]$ with $n \in
\mathbb{N}$.  We write \card{s} the length of $s$, $\emptyset$ the
empty sequence, $s[i]$ the ith element of $s$, $s[m:n]$ the sequence
$s[m],\dots,s[n]$ and $s,s'$ the concatenation of two sequences $s$
and $s'$.  We write $e\in s$ and $E\subseteq s$ for, respectively,
$\exists i: s[i]=e$ and $\forall e\in E, e\in s$.

\subsection{Deduction systems}
\label{subsec:deduction}
The new values created by the mediator are constants in a subset
\Nonces\ of \UniAt{C}. We assume that both $\Nonces$ and
$\UniAt{C}\setminus \Nonces$ are infinite. Given $l_1,\ldots,l_n,r\in
\Universe$, the notation $l_1,\ldots,l_n\stdrule r$ denotes a
\emph{deduction rule} if $\Var{r}\subseteq \bigcup_{i=1}^n\Var{l_i}$.
A \emph{deduction} is a ground instance of a deduction rule.
A \emph{deduction system} is a set of deduction rules that contains a
finite set of deduction rules in addition to all \emph{nonce creation
  rules} $\hnoncereceive{n}$ (one for every $n \in \Nonces$) and all
reception rules $\hsend{t}$ (one for every $t\in\Universe$).
All rules but the reception rules are called \emph{standard} rules.
The deduction system describes the abilities of the mediator to
process the messages.  In the rest of this section we fix an arbitrary
deduction system $\mathcal{D}$.  We denote by $l\anyrule r$ any rule
and $l\stdrule r$ any standard rule.

\subsection{Derivations and localizations}
\label{subsec:derivation}

%\subsubsection{Definition}
%\label{subsubsec:derivation:definition}

A \emph{derivation} is a sequence of deductions, including receptions
of messages from available services, performed by the mediator. Given
a sequence of deductions $E=(l_i\anyrule r_i)_{i=1,\dots,m}$ we denote
$\RHSset{E}{i}$ the set $\condset{r_j}{j\le i}$.
\begin{definition}[Derivation]
  A sequence of deductions $D=(l_i\anyrule r_i)_{i=1,\dots,m}$ is a
  \emph{derivation} if for any $i\in \set{1,\dots,m}$, $l_i\subseteq
  \RHSset{D}{i-1}$.
\end{definition}
% Given a derivation $D$ we denote by $\Next{D}{i}$ the minimal index
% $j>i$ such that $D[j] = \hsend{r}$ for some term $r$ if such a $j$
% exists. Otherwise $\Next{D}{i}=\card{D}+1$.
Given a derivation $D$ we define $\Next{D}{i}=\min(\set{\card{D}+1}\cup \{j : j>i \mbox{ and } D[j] = \hsend{t_j}\})$.
%\fix{T}{ Changed from +inf: 
%As usual,
%$\Next{D}{i}=+\infty$ if the set of such $j$ is empty.}
%Derivations are proofs that some terms can be deduced starting from an
%initial set of terms.  
The explicit knowledge of the mediator is the set of terms it has
already deduced, and its implicit knowledge is the set of terms it can
deduce.  If the former is $K$ we denote the latter $\Der{K}$. A
derivation $D$ is a \emph{proof} of $s\in\Der{K}$ if $\hsend{r}\in D$
implies $r\in K$, and $ D[\card{D}]= l \anyrule t$.  Thus, we have: \par
% \cmnt{should say $D$-derivation ?}
\centerline{\ensuremath{\Der{K}=\condset{t}{\exists  D \mbox{ derivation } \text{ s.t. } \hsend{r}\in D \text{
    implies } r\in K \text{, and } D[\card{D}]= l \stdrule t }}}
  
%anyrule for the case s in K
%\subsubsection{Localization} 
%\label{subsubsec:derivation:localization}

\subsection{Constraint systems}
\label{subsec:constraint}
\begin{definition}[Constraint system]
  A constraint system $\css$ is a sequence of constraints where each
  constraint has one of three forms (where $t$ is a term):

%   \vspace*{-0.6em}
     \begin{enumerate}
     \item $\rcv{}{t}$, denoting a message reception by an available
       service or a client,
     \item $\snd{}{t}$, denoting a message emission by an available
       service or a client,
     \item $\forbid{}{t}$, a negative constraint, denoting that the
       mediator must not be able to deduce $t$ at this point;
     \end{enumerate} 
%      \vspace*{-2ex}
 
     and that satisfies the following properties for any $1\le i \le \card{\css}$:

%   \vspace*{-0.6em}
     \begin{description}
     \item[Origination:] if
       $\css[i]=\snd{}{t_i}$ then $\vars{t_i}\subseteq
       \bigcup_{j<i}\vars{\condset{t_j}{ \css[j]=\rcv{}{t_j}}}$;
     \item[Determination:] if
       $\css[i]=\forbid{}{t_i}$ then $\vars{t_i}\subseteq
       \bigcup_{j}\vars{\condset{t_j}{ \css[j]=\rcv{}{t_j}}}$.
     \end{description}
%      \vspace*{-2ex}
 
\end{definition}
\emph{Origination} means that every unknown in a service's
state originates from previous input by the mediator.
\emph{Determination} means that negative constraints are on messages
determined by a service's state at the end of its execution.

In the rest of this paper \css\ (and decorations thereof) denotes a 
constraint system.  An index $i$ is a \emph{send} (\textit{resp.} a \emph{receive})
index if $\css[i]=\snd{}{t}$ (\textit{resp.} $\css[i]=\rcv{}{t}$) for
some term $t$. If $i_1,\ldots,i_k$ is the sequence of all send
(\textit{resp.} receive) indices in $\css$ we denote $\Out\css$
(\textit{resp.} $\In\css$) the sequence $\css[i_1],\ldots,\css[i_k]$.
We note that the origination and determination properties imply
$\Var{\css}=\Var{\In{\css}}$.  Given $1\le i \le \card{\css}$ we
denote $\previous{i}$ to be $\max (\set{0}\cup \condset{j}{j\le i \text{ and }
  \css[j]= \snd{}{t_j}})$.

\begin{definition}[Solution of a constraint system]\label{def:solution:sigma}
  A ground substitution $\sigma$ is a solution of $\css$, and we
  denote $\sigma\models\css$, if $\dom{\sigma}=\Var{\css}$ and
  \begin{enumerate}
  \item if $\css[i]= \rcv{}{t}$ then $t\sigma\in\Der{\condset{t_j\sigma}{j\le
      \previous{i} \text{ and } \css[j]=\snd{}{t_j}}}$ 
  \item if $\css[i]= \forbid{}{t}$ then $t\sigma\notin\Der{\condset{t_j\sigma}{j\le
      \previous{i} \text{ and } \css[j]=\snd{}{t_j}}}$ 
  \end{enumerate}
\end{definition}

%Finally, and since they are the basis of our definition of knowledge
%\Der{K}, 
% Derivations can be also be viewed as  proofs that a given substitution is a
% solution of a given constraint system. 		

\begin{definition}[Compliant derivations]
  Let $\sigma$ be a ground substitution with
  $\dom{\sigma}=\Var{\css}$.  A derivation $D$ is
  \emph{$(\css,\sigma)$-compliant} if there exists a strictly
  increasing bijective mapping $\alpha$ from the send indices of
  $\css$ to the set $\condset{j}{D[j] = \hsend{r}}$ such that
  $\css[i]=\snd{}{t}$ implies $D[\alpha(i)]=\hsend{t\sigma}$.
\end{definition}
An example of $(\css,\sigma)$-compliant derivation is shown in
\Cref{fig:css|der}. Since a sequence of receptions is a derivation, we
note that for every ground substitution $\sigma$ with
$\dom{\sigma}=\Var{\In{\css}}$ there exists at least one compliant
derivation $D$.

\ifpdf
\begin{figure}
\centering
 \includegraphics[width=.95\textwidth]{./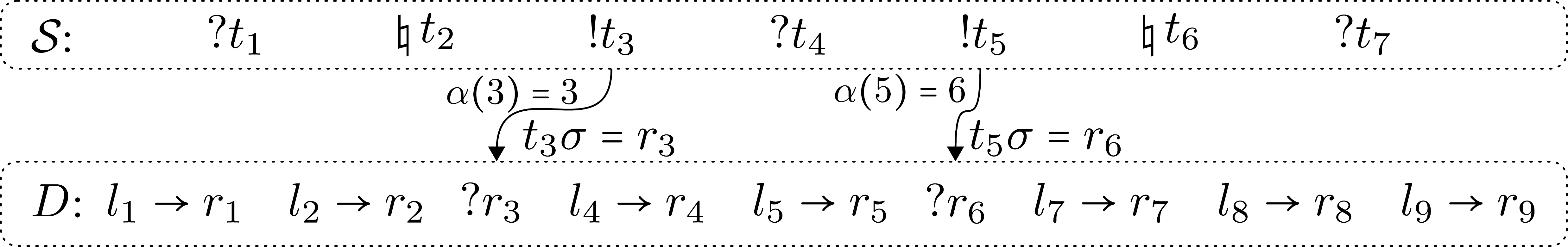}
 \caption{\label{fig:css|der} A constraint system and a compliant
   derivation} 
\end{figure}
\fi

\begin{definition}[Proof of a solution]
  Let $\sigma$ be a ground substitution. A derivation $D$ is a
  \emph{proof} of $\sigma\models\css$, and we denote
  $D,\sigma,\alpha\vdash \css$, if:
\begin{enumerate}
\item $D$ is $(\css,\sigma)$-compliant with the mapping $\alpha$ and 
\item if $\css[i]=\rcv{}{t}$ there is 
  $j<\Next{D}{\alpha(\previous{i})}$ such that $D[i] = l\anyrule t\sigma$ and 
\item if $\css[i]=\forbid{}{t}$ then $t\sigma\notin\Der{\condset{t_j\sigma}{j\le
      \previous{i} \text{ and } \css[j]=\snd{}{t_j}}}$.
\end{enumerate}
\end{definition}

In \Cref{fig:css|der}, if $\sigma$ is a solution of $\css$ and, for
example, $t_1\sigma = r_2$, $t_2\sigma\notin\Der{\emptyset}$,
$t_4\sigma=r_4$, $t_6\sigma\notin\Der{\set{r_3,r_6}}$ and $t_7\sigma =
r_8$ then $D$ is a proof of $\sigma \models \css$.

Let us prove that if $\sigma\models\css$ then there is a proof $D,\sigma,\alpha\vdash \css$.

%\subsubsection{Maximal proofs}

\begin{definition}[Maximal derivation] \label{def:maximal:proof} Let
  $T$ be a finite set of terms and $\sigma$ be a ground substitution
  with $\dom{\sigma}=\Var{T}$. A derivation $D$ is
  $(T,\sigma)$-maximal iff for every $t\in\Sub{T}$,
  $t\sigma \in \Der{\RHSset{D}{i}}$ implies $t\sigma \in
  \RHSset{D}{\Next{D}{i}-1}$.
\end{definition}
First we prove that maximal derivations are natural proof candidates of $\sigma\models\css$.
%We may use the notion maximal derivation to 
%give for a substitution a criteria of being a model for a constraint system. 
\begin{lemma}\label{:model|criteria}
  Let $\sigma$ be a ground substitution with $\dom\sigma=\Var{\css}$
  and $D$ be a $(\css,\sigma)$-compliant $(\scss,\sigma)$-maximal
  derivation.  \textbf{Then } $\sigma \models \css$ iff for
  all $i$
  \begin{itemize*}
   \item if $\css[i]=\rcv{}{t}$ then there exists 
              $ j < \Next{D}{\alpha(\previous{i})}:  D[j]=l\anyrule t\sigma$ and 
   \item if $ \css[i]=\forbid{}{t}$ then 
          for all $j< \Next{D}{\alpha(\previous{i})}:  D[j] \neq l\anyrule t\sigma$. 
  \end{itemize*}
\end{lemma}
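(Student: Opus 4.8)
The plan is to reduce both equivalences in the statement to a single observation. Fix an index $i$, set $k=\alpha(\previous{i})$ (with the convention $\alpha(0)=0$, so $\Next{D}{k}$ is the position in $D$ of the first reception rule after position $k$), and let
\[
K_i=\condset{t_j\sigma}{j\le\previous{i}\text{ and }\css[j]=\snd{}{t_j}}
\]
be the explicit knowledge used at step $i$ in \Cref{def:solution:sigma}. The core observation is: for every $t\in\scss$,
\[
t\sigma\in\Der{K_i}\quad\Longleftrightarrow\quad\exists\, j<\Next{D}{k}\ :\ D[j]=l\anyrule t\sigma .
\]
Granting this, the lemma is immediate. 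For a receive index $i$ with $\css[i]=\rcv{}{t}$ we have $t\in\scss$, and the left-hand side of the observation is exactly the first clause of \Cref{def:solution:sigma} while its right-hand side is the first bullet of the lemma. For a negative index $i$ with $\css[i]=\forbid{}{t}$ we again have $t\in\scss$, and the negations of the two sides of the observation are, respectively, the second clause of \Cref{def:solution:sigma} and the second bullet. Conjoining over all $i$ yields ``$\sigma\models\css$ iff both bullets hold for all $i$''.

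To prove the observation I would first pin down the relation between $K_i$ and the prefixes of $D$. As $D$ is $(\css,\sigma)$-compliant, its reception rules are exactly $D[\alpha(j)]=\hsend{t_j\sigma}$ as $j$ ranges over the send indices of $\css$, and $\alpha$ is strictly increasing; hence the reception rules of $D$ at positions $\le k=\alpha(\previous{i})$ are precisely those for send indices $j\le\previous{i}$, which gives $K_i\subseteq\RHSset{D}{k}$. By definition of $\Next{D}{\cdot}$, no reception rule occurs in $D$ at a position strictly between $k$ and $\Next{D}{k}$; so every $D[q]$ with $k<q<\Next{D}{k}$ is a standard rule, and every reception rule of $D$ at a position $<\Next{D}{k}$ sits at a position $\le k$ and thus has right-hand side in $K_i$. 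A routine induction on $q$ from $0$ to $\Next{D}{k}-1$ — using that $D$ is a derivation (each left-hand side of $D[q]$ lies in $\RHSset{D}{q-1}$) and that $\Der{K_i}$ is closed under applying a standard rule to already-derived terms — then yields $\RHSset{D}{\Next{D}{k}-1}\subseteq\Der{K_i}$.

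With these two inclusions the observation is short. For $(\Leftarrow)$: if $D[j]=l\anyrule t\sigma$ with $j<\Next{D}{k}$, then $t\sigma\in\RHSset{D}{\Next{D}{k}-1}\subseteq\Der{K_i}$. For $(\Rightarrow)$: if $t\sigma\in\Der{K_i}$, then $t\sigma\in\Der{\RHSset{D}{k}}$ by monotonicity of $\Der{\cdot}$ and $K_i\subseteq\RHSset{D}{k}$; since $t\in\scss$, the $(\scss,\sigma)$-maximality of $D$ (\Cref{def:maximal:proof}) applied with the index $k$ forces $t\sigma\in\RHSset{D}{\Next{D}{k}-1}$, i.e.\ $D[j]=l\anyrule t\sigma$ for some $j\le\Next{D}{k}-1$. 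I expect the only real work to lie in the bookkeeping of the previous paragraph — tracking, through $\alpha$ and $\Next{D}{\cdot}$, exactly which reception rules of $D$ are ``active'' at constraint $i$, and checking the corner case $\previous{i}=0$ (where $K_i=\emptyset$, $\RHSset{D}{0}=\emptyset$, and $\Next{D}{0}$ is the position of the first reception of $D$). The single genuinely non-syntactic ingredient is the appeal to maximality in the $(\Rightarrow)$ direction for receive indices: it is precisely what ensures that a compliant derivation actually produces each received message before the subsequent emission.
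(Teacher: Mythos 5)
Your proof is correct; note that the paper states \Cref{:model|criteria} without giving any proof, and your argument supplies exactly the intended routine verification: compliance of $D$ gives $K_i\subseteq\RHSset{D}{\alpha(\previous{i})}$ for $K_i=\condset{t_j\sigma}{j\le\previous{i},\ \css[j]=\snd{}{t_j}}$, a prefix induction (using that $D$ is a derivation and that $\oder$ is closed under standard deductions) gives $\RHSset{D}{\Next{D}{\alpha(\previous{i})}-1}\subseteq\Der{K_i}$, and $(\scss,\sigma)$-maximality at index $\alpha(\previous{i})$ yields the converse, after which both bullets are just \Cref{def:solution:sigma} restated constraint by constraint. The only point worth making explicit is that the reception steps of your induction (and the case where the witnessing $D[j]$ is itself a reception) use $K_i\subseteq\Der{K_i}$: this holds under the paper's prose definition of a proof of $s\in\Der{K}$ (last rule of the form $l\anyrule t$) but not literally under its displayed formula (last rule $l\stdrule t$); since under the strict reading the lemma itself would fail (a message merely forwarded from $K_i$ would not count as deducible), the inclusive reading is clearly the intended one, and with that reading your argument is complete.
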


In the next lemma we show that any $(T,\sigma)$-maximal derivation $D$
may be extended into a $(T',\sigma')$-maximal derivation 
for an arbitrary extension $T',\sigma'$ of $T, \sigma$
by adding into $D$ only standard deductions.
% \textbf{TO DO: some text to explain lemma?}\par
% \textbf{TO DO: some text to explain lemma?}\par

\begin{lemma}{\label{lemma:existence:maximal:localization}}  	
  Let $\sigma$ be a ground
  substitution with $\dom{\sigma}=\Var{\css}$.
  Let $T_1,T_2$ be two sets of terms such that $T_1\subseteq T_2$, and
  $\sigma_1,\sigma_2$ be two substitutions such that $\dom{\sigma_1} =
  \Var{T_1}$ and $\dom{\sigma_2}=\Var{T_2}\setminus \Var{T_1}$.
  If $D$ is a $(T_1,\sigma_1)$-maximal $(\css,\sigma)$-compliant
  derivation in which no term is deduced twice by a standard rule, 
  \textbf{then}
  there exists a
  $(T_2,\sigma_1\cup\sigma_2)$-maximal
  $(\css,\sigma)$-compliant derivation $D'$ in which no term is deduced twice by a standard rule 
  such that every deduction
  whose right-hand side is in $\Sub{T_1}\sigma_1$ occurs in $D'$ iff it
  occurs in $D$.
\begin{proof}
%   Let  $\css$ be a constraint system, and $\sigma$ be a ground
%   substitution with $\Dom\sigma=\Var{\css}$.

%   Let $T_1,T_2,\sigma_1,\sigma_2$ be as in the statement of the lemma,
%   and 
%   Let $D$ be a $(\css,\sigma)$-compliant $(T_1,\sigma_1)$-maximal
%   derivation in which no term is deduced twice by a standard rule. 

  Let $i_1,\ldots,i_k$ be the indices of the non-standard rules in
  $D$, let  $D[i_j]=\hsend{t_{i_j}}$, and let  for $0\le j\le k$
  $D_j=D[i_j+1:i_{j+1}-1]$ with $i_0=0$ and
  $i_{k+1}=\card{D}+1$. 
%   \begin{explanation}
  That is, $D=D_{0}, \snd{}{t_{i_1}}, D_{1},\snd{}{t_{i_2}}, D_{2}, \dots \snd{}{t_{i_k}}, D_{k}$.
%   \end{explanation}
  Noting that
  $\dom{\sigma_1}\cap\dom{\sigma_2} = \emptyset$ 
  let $\sigma'=\sigma_1\cup\sigma_2$.
     
  For each $t\in \Sub{T_2}$ such that
  $t\sigma'\in\Der{t_{i_1},\ldots,t_{i_k}}$ let $i_t$ be minimal such
  that $t\sigma'\in\Der{t_{i_1},\ldots,t_{i_t}}$, and let $E^0_t$ be a
  proof of this fact, and $E_t$ be a sequence of standard deductions
  obtained by removing every non-standard deduction from $E^0_t$.
%   \begin{explanation}
%   For example,  
%  $\rlap{$\overbrace{\phantom{\snd{}{t_{i_1}}, \dots, \snd{}{t_{i_t}}, l_1\to r_1, \dots, l_u \to t\sigma'}}^{E_t^0}$}\snd{}{t_{i_1}}, \dots, \snd{}{t_{i_t}},
%      \underbrace{l_1\to r_1, \dots, l_u \to t\sigma'}_{E_t}$.
%   Note that $E_t$ can also be empty.
%   \end{explanation}

  For $0\le j\le k$ let $D'_j$ be the sequence of standard deduction
  steps $D_j,E_{s_1},\ldots,E_{s_p}$ for all $s_m\in\Sub{T_2}\sigma'\setminus\Sub{T_1}\sigma'$ such that
  $i_{s_m}=j$ in which every rule of $E_{s_1},\ldots,E_{s_p}$ that deduces a term previously
  deduced in the sequence %$D_j,D_{s_1},\ldots,D_{s_p}$ 
  or for some $m\le j$
   deduced in $D'_m$ or in $D[i_m]$  is removed.

% \begin{explanation}
%  \cmnt{should remove or rewrite}
% That is, we obtain $D'$ from $ D_{0}, E_{s^0_1},\ldots,E_{s^0_{p^0}}, \snd{}{t_{i_1}}, D_{1},  E_{s^1_1},\ldots,E_{s^1_{p^1}},  \snd{}{t_{i_2}}, D_{2}, E_{s^2_1},\ldots,E_{s^2_{p^2}}, \dots \snd{}{t_{i_k}}, D_{k}, E_{s^k_1},\ldots,E_{s^k_{p^k}}$ by removing all standard deductions from left to right which deduce a term already deduced before.
% \end{explanation}

  Let $D'=D'_0,\hsend{t_{i_1}},D'_1,\ldots,\hsend{t_{i_k}},D'_k$.  We
  have deleted in each $E_t^0$ only deductions whose right-hand side
  occurs before in $D'$, and thus $D'$ is a derivation. Since the
  $D'_i$ contains only standard deductions,
  we can see that $D'$ is 
%   it has the same sequence of
%   non-standard deduction rules as $D$. Thus it is
  $(\css,\sigma)$-compliant. 

  Since $D$ is $(T_1,\sigma_1)$-maximal and no term is deduced twice
  in $D$ we note that, for $t\in T_1$, no standard deduction of
  $t\sigma_1$ from a sequence $D_j$ is deleted.			%		\cmnt{not formally correct: \verb~!<a,a> !a  <a,a>-> a~  could be in D}
  Furthermore we note that standard deductions of terms $T_2\sigma_2$
  that are also in $T_1\sigma_1$ are deleted by construction and by
  the maximality of $D$. Thus a deduction whose right-hand side is in
  $\Sub{T_1}\sigma_1$ is in $D'$ iff it occurs in $D$.

  By construction $D'$ is $(T_2,\sigma')$-maximal and no term is deduced twice by standard deductions.
%   and since $\dom{\sigma_1\cup\sigma_2}=\vars{T}$, $D'$ is $(T_2,\sigma\cup\sigma_1\cup\sigma_2)$-maximal.
\end{proof}

\end{lemma}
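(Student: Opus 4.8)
The plan is to build $D'$ by \emph{surgery} on $D$: cut $D$ at its reception rules into blocks of standard deductions, splice into each block a standard-only proof of each new ground subterm of $T_2$ that first becomes derivable at that point, and finally prune redundant standard deductions. Write $D = D_0,\hsend{t_{i_1}},D_1,\ldots,\hsend{t_{i_k}},D_k$ where $i_1<\cdots<i_k$ are the positions of the reception rules, $D[i_m]=\hsend{t_{i_m}}$, and each $D_m$ is a (possibly empty) block of standard deductions; set $\sigma'=\sigma_1\cup\sigma_2$, which is ground and well defined since $\dom{\sigma_1}\cap\dom{\sigma_2}=\emptyset$. For each ground term $s\in(\Sub{T_2}\sigma'\setminus\Sub{T_1}\sigma')\cap\Der{\set{t_{i_1},\ldots,t_{i_k}}}$, let $m(s)$ be least with $s\in\Der{\set{t_{i_1},\ldots,t_{i_{m(s)}}}}$, fix a proof $E^0_s$ of this fact, and let $E_s$ be the sequence of standard deductions obtained by deleting the reception rules from $E^0_s$. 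Since a proof of $s\in\Der{K}$ may only receive terms of $K$, every ``leaf'' of $E_s$ previously fed by a deleted reception is one of $t_{i_1},\ldots,t_{i_{m(s)}}$.

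I would then define $D'$ by replacing each block $D_m$ by $D_m$ followed by all the $E_s$ with $m(s)=m$ (in a fixed order), and afterwards deleting every standard deduction whose right-hand side has already been produced earlier in the sequence under construction, keeping only its first occurrence. Three of the four requirements are then routine. \textbf{Derivation:} each deleted step had its right-hand side produced earlier, so no left-hand side is left orphaned, and each retained $E_s$ sits at a block where all its external leaves $t_{i_1},\ldots,t_{i_{m(s)}}$ are already available. \textbf{Compliance:} no reception rule was added, removed, or moved, only standard deductions between them, so $D'$ is $(\css,\sigma)$-compliant with the same $\alpha$ as $D$. \textbf{No double standard deduction:} this is exactly what the final pruning pass enforces.

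For \textbf{$(T_2,\sigma')$-maximality}, take $t\in\Sub{T_2}$ and $i$ with $t\sigma'\in\Der{\RHSset{D'}{i}}$, and let $m$ be the index of the last reception occurring at or before $i$ (or $0$). By induction on position, every right-hand side in $\RHSset{D'}{i}$ lies in $\Der{\set{t_{i_1},\ldots,t_{i_m}}}$, so (using that $\Der$ is monotone and idempotent) $t\sigma'\in\Der{\set{t_{i_1},\ldots,t_{i_m}}}$; hence if $t\sigma'\notin\Sub{T_1}\sigma'$ then $m(t\sigma')\le m$ and the spliced copy $E_{t\sigma'}$, living in block $m(t\sigma')\le m$, produces $t\sigma'$ before $\Next{D'}{i}$, as required. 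The remaining case $t\sigma'\in\Sub{T_1}\sigma'$ is handled together with the last requirement.

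The last clause — \textbf{a deduction with right-hand side in $\Sub{T_1}\sigma_1$ occurs in $D'$ iff in $D$} — is the main obstacle, and it is precisely what would let one iterate the lemma to grow maximal derivations. One must show that splicing introduces no new deduction with right-hand side in $\Sub{T_1}\sigma_1$ and that the pruning pass deletes no original one. The key is $(T_1,\sigma_1)$-maximality of $D$ together with ``no term deduced twice in $D$'': if $t\in\Sub{T_1}$ and $t\sigma_1$ occurs as an intermediate term of some $E^0_s$ with $m(s)=m$, then $t\sigma_1\in\Der{\set{t_{i_1},\ldots,t_{i_m}}}\subseteq\Der{\RHSset{D}{i_m}}$, so by maximality $t\sigma_1\in\RHSset{D}{\Next{D}{i_m}-1}$, i.e.\ $t\sigma_1$ is already produced in $D$ no later than block $m$; therefore its copy inside the spliced $E_s$ is killed by pruning, while the (unique, by the no-double-deduction hypothesis) original in $D$ is the first occurrence of $t\sigma_1$ in $D'$ and is thus never the redundant one. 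This pins the set of deductions with right-hand side in $\Sub{T_1}\sigma_1$ to be the same in $D$ and $D'$, and also handles the pending maximality case: when $t\sigma'\in\Sub{T_1}\sigma'$, that same argument shows $t\sigma'$ is produced in $D$ by the end of block $m$ and this first occurrence survives into $D'$, so $t\sigma'\in\RHSset{D'}{\Next{D'}{i}-1}$.
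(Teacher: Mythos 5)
Your construction is the same surgery the paper performs: cut $D$ at its receptions, splice after block $j$ standard-only proofs $E_s$ of the instances $s\in\Sub{T_2}\sigma'\setminus\Sub{T_1}\sigma'$ that first become derivable there, prune duplicates, and verify compliance, maximality and the preservation clause exactly as the paper does (in particular your use of $(T_1,\sigma_1)$-maximality to show that a $\Sub{T_1}\sigma_1$-instance met inside some $E^0_s$ is already produced in $D$ by the end of block $m(s)$ is the paper's argument). The one genuine divergence is the pruning pass, and it is where your proof breaks. The paper deletes redundant rules \emph{only among the spliced $E$-rules} and never touches the original rules of $D$; you prune globally, deleting \emph{every} standard deduction whose right-hand side was produced earlier, including original rules of $D$. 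Your justification that this is harmless — ``the original in $D$ is the first occurrence of $t\sigma_1$ in $D'$, hence never the redundant one'' — silently assumes that a term deduced by a standard rule in $D$ was not produced earlier in $D$ by some other means. The hypothesis only forbids a term being deduced twice \emph{by standard rules}; it does not forbid a term being received and later re-deduced by a standard rule, and such a $D$ can perfectly well be $(T_1,\sigma_1)$-maximal.

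Concretely, take $\css=\snd{}{a}\ \snd{}{\pair{a}{b}}$, $T_1=T_2=\set{a}$, and $D=\hsend{a},\hsend{\pair{a}{b}},(\pair{a}{b}\to a)$: this $D$ satisfies all hypotheses, no splicing is needed, yet your global pruning deletes the decomposition $\pair{a}{b}\to a$ because $a\in\RHSset{D}{1}$ already. The resulting $D'$ then violates the conclusion's ``occurs in $D$ $\Rightarrow$ occurs in $D'$'' direction for a deduction whose right-hand side lies in $\Sub{T_1}\sigma_1$ — and that direction is precisely what Lemma~\ref{lemma:localization} later relies on when it re-applies this lemma and argues about which decompositions of $D$ survive into $D'$. (Your maximality and compliance arguments are unaffected, since for those it only matters that the \emph{term} is produced early enough, not which rule produces it.) The fix is exactly the paper's choice: restrict the pruning to the freshly spliced $E$-rules, so that every original deduction of $D$ is kept verbatim; with that change the rest of your argument goes through as written.
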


Taking $T_1=\emptyset$, $T_2=\Sub{{\css}}$, and $\sigma_2=\sigma$,
Lemma~\ref{lemma:existence:maximal:localization} implies that for
every substitution $\sigma$ of domain $\Var{\css}$ there exists a
$(\css,\sigma)$-compliant $(\Sub{{\css}},\sigma)$-maximal derivation
$D$.  By Lemma~\ref{:model|criteria} if $\sigma \models \css$ then $D$
is a proof of $\sigma \models \css$. Since the converse is trivial, it
suffices to search proofs maximal wrt $T\supseteq \Sub{\css}$.

\section{Subterm deduction system}
\label{sec:subterm}

\subsection{Definition and main property}

We say that a deduction system is a \emph{subterm deduction system}
whenever each deduction rule which is not a nonce creation or a
message reception is either:
\begin{enumerate}
\item $x_1,\ldots,x_n \stdrule f(x_1,\ldots,x_n)$ for a
  function symbol $f$;
\item $l_1,\ldots,l_n\stdrule r$ for some terms $l_1,\ldots,l_n, r$ such that  $r\in\bigcup_{i=1}^n \Sub{l_i}$. 
\end{enumerate}
A \emph{composition} rule is either a message reception, a nonce
creation, or a rule of the first type. A deduction rule is otherwise a
\emph{decomposition} rule.  Reachability problems 
%\marginpar{undefined ?} 
for deduction
systems with a convergent equational theory 
% \marginpar{deduction sys.with equ. theory: undefined?} 
are reducible to the
satisfiability of a constraint system in the empty theory for a
deduction system in our
setting~\cite{Lynch-Meadows-Arspa,Kourjieh-these}. If furthermore the
equational theory is \emph{subterm}~\cite{Baudet} the reduction is to
a subterm deduction system as just defined above.
 
%\subsection{Main property of subterm deduction systems}

Now we show that if $D,\sigma,\alpha \vdash \css$, a term
$s\in\Sub{D}$ is either the instance of a non-variable subterm of
${\Out{\css}}$ or deduced by a standard composition.

\begin{lemma}\label{lemma:variable|created|by|composition}
  Let $\sigma$ be a ground substitution such that $\sigma\models\css$.
  If $D$ is a proof of $\sigma\models\css$ such that no term is
  deduced twice in $D$ by standard rules and $s$ is a term such that
  $s\in\Sub{D}$ and $s\notin
  (\Sub{\Out{\css}}\setminus\Variables)\sigma$ \textbf{then} there
  exists an index $i$ in $D$ such that $D[i]=l\to s$ is a composition
  rule and $s\notin\Sub{\RHSset{D}{i-1}}$.
\end{lemma}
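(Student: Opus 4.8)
The plan is to look at the \emph{first} step of $D$ at which $s$ occurs as a subterm and argue that this occurrence is forced to be a standard composition (or a nonce creation) whose right‑hand side is exactly $s$. First I would pick $i$ minimal with $s\in\Sub{D[i]}$; this exists since $s\in\Sub{D}$. Writing $D[i]=l_i\anyrule r_i$, the definition of a derivation gives $l_i\subseteq\RHSset{D}{i-1}$, hence $\Sub{l_i}\subseteq\Sub{\RHSset{D}{i-1}}$; by minimality of $i$ we have $s\notin\Sub{\RHSset{D}{i-1}}$, so $s$ can only occur in $r_i$, and the second half of the conclusion, $s\notin\Sub{\RHSset{D}{i-1}}$, is already secured. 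It then remains to show that $D[i]$ is a composition rule with $r_i=s$, which I would establish by cases on the shape of $D[i]$.

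The two composition cases are immediate. If $D[i]$ is a nonce creation $\anyrule n$ then $\Sub{r_i}=\set{n}$, so $s=r_i=n$ and $D[i]$ is a composition rule. If $D[i]$ is an instance $u_1,\dots,u_m\anyrule f(u_1,\dots,u_m)$ of a composition rule, then each $u_j\in\RHSset{D}{i-1}$, so from $s\in\Sub{r_i}=\set{f(u_1,\dots,u_m)}\cup\bigcup_j\Sub{u_j}$ together with $s\notin\Sub{\RHSset{D}{i-1}}$ we must have $s=f(u_1,\dots,u_m)=r_i$. If $D[i]$ is a standard decomposition rule, then by definition of a subterm deduction system $r_i$ is a subterm of one of the terms on its left‑hand side, and the latter lies in $\RHSset{D}{i-1}$; hence $s\in\Sub{r_i}\subseteq\Sub{\RHSset{D}{i-1}}$, contradicting the choice of $i$, so this case does not arise.

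The remaining, and genuinely delicate, case — which I expect to be the main obstacle — is when $D[i]$ is a reception $\hsend{r_i}$. Here I would use $(\css,\sigma)$‑compliance to write $r_i=t\sigma$ where $\css[k]=\snd{}{t}$ is the send constraint with $\alpha(k)=i$ and $\vars{t}\subseteq\Var{\css}=\dom{\sigma}$. The standard decomposition $\Sub{t\sigma}=(\Sub{t}\setminus\Variables)\sigma\cup\bigcup_{x\in\vars{t}}\Sub{x\sigma}$, together with $(\Sub{t}\setminus\Variables)\sigma\subseteq(\Sub{\Out{\css}}\setminus\Variables)\sigma$ and the hypothesis $s\notin(\Sub{\Out{\css}}\setminus\Variables)\sigma$, forces $s\in\Sub{x\sigma}$ for some $x\in\vars{t}$. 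The Origination property then yields $k'<k$ with $\css[k']=\rcv{}{t'}$ and $x\in\vars{t'}$, and since $D$ is a proof of $\sigma\models\css$ there is $j<\Next{D}{\alpha(\previous{k'})}$ with $D[j]=l\anyrule t'\sigma$. The crux is an index computation showing $j<i$: since $\previous{k'}\le k'<k$ and $\alpha$ is strictly increasing on send indices with $D[\alpha(k)]=D[i]$ a reception, one gets $\Next{D}{\alpha(\previous{k'})}\le i$ (using the conventions $\alpha(0)=0$ and $\Next{D}{0}$ the first reception index of $D$ in the case $\previous{k'}=0$); hence $j\le i-1$. Then $t'\sigma=r_j\in\RHSset{D}{i-1}$, so $s\in\Sub{x\sigma}\subseteq\Sub{t'\sigma}\subseteq\Sub{\RHSset{D}{i-1}}$, contradicting the minimality of $i$. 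Thus $D[i]$ cannot be a reception, only the two composition cases survive, and the lemma follows. The part requiring the most care is exactly this last index bookkeeping, which links Origination, the mapping $\alpha$, $\previous{\cdot}$ and $\Next{D}{\cdot}$ to place the value $x\sigma$ of the offending variable strictly before step $i$.
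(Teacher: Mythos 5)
Your proof is correct and follows essentially the same route as the paper's: take the first step where $s$ occurs, rule out decompositions by the subterm property, and rule out receptions by combining compliance, Origination, and the proof condition to place $x\sigma$ strictly before step $i$, forcing a standard composition with $s$ as its right-hand side. Your explicit index bookkeeping with $\previous{\cdot}$, $\alpha$, and $\Next{D}{\cdot}$ only fills in a step the paper states without detail ("there exists $j<i$"), so there is no substantive difference.
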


\begin{proof}
  First we note that by definition of subterm deduction systems for
  any decomposition rule $l \to r$ we have \textit{a)} $r\in\Sub{l}$,
  and \textit{b)} for any composition rule $l \to r$ we have $l
  \subset \Sub{r}$ and $\Sub{r}\setminus \Sub{l} = \set{r}$.

  Let $D$ be a proof of $\sigma\models\css$, and let $i$ be minimal
  such that $D[i]=l_r\anyrule r$ with $s\in\Sub{r}$.
  Since $l_r\subseteq \RHSset{D}{i-1}$, the minimality of $i$
  implies $s \in \Sub{r}\setminus \Sub{l_r}$.

  Thus by \textit{a)} $D[i]$ cannot be a decomposition.
  
  If $D[i]=\hsend{r}$ then by the $(\css,\sigma)$-compliance of $D$ we
  have ${\css}[\alpha^{-1}(i)] = \snd{}{t}$ with $t\sigma = r$.  We
  have $s\in\Sub{r} = \Sub{t\sigma}= \Sub{t}\sigma \cup
  \Sub{\vars{t}\sigma}$.  

  If $s\in (\Sub{\Out{S}}\setminus\Variables)\sigma$ we are done,
  otherwise there exists $y\in\vars{t}$ with $s
  \in\Sub{y\sigma}$.  By the origination property, there exists
  $k<\alpha^{-1}(i)$ such that $\css[k]=\rcv{}{t'}$ with
  $y\in\vars{t'}$.  Since $D, \sigma , \alpha \vdash \css$ and
  $k<\alpha^{-1}(i)$ there exists $j<i$ such that $D[j]=l_j\to
  t'\sigma$.  The minimality of $i$ is contradicted by
  $s\in\Sub{t'\sigma}$.

  Therefore, $D[i] = l_r \to r$ is a standard composition rule. 
  As a consequence, $\Sub{r}\setminus\Sub{l_r} = \set{r}$.
  Since $s \in \Sub{r}\setminus\Sub{l_r}$, we finally obtain $s = r$.
\end{proof}

\subsection{Locality}
\label{subsec:locality}
Subterm deduction systems are not necessarily local in the sense
of~\cite{locality}. However we prove in this subsection that given
$\sigma$, there exists a finite extension $T$ of \Sub{\css} and an
extension $\sigma'$ of $\sigma$ of domain $\Var{T}$ and a
$(T,\sigma')$-maximal derivation $D$ in which every deduction relevant
to the proof of $\sigma\models\css$ is liftable into a deduction
between terms in $T$.  Let us first precise the above statements.

% Our goal is to abstract a derivation by a finite set of terms on which
% it is possible to construct all deductions. Since for such a set $T$
% all deductions are among subterms of $T$, we say that it \emph{localizes} the
% derivation.
\begin{definition}[Localization set]\label{def:localization}
  A set of terms $T$ \emph{localizes} a derivation $D = (l_i\anyrule
  r_i)_{1\le i\le m}$ for a substitution $\sigma$ of domain $\Var{T}$
  if for every $1\le i\le m$ if $D[i]$ is a standard rule and there exists $t\in
  \Sub{T}\setminus \Variables$ such that $t\sigma=r_i$, there exists
  $t_1,\ldots,t_n\in \Sub{T}$ such that
  $\set{t_1\sigma,\ldots,t_n\sigma}\subseteq \RHSset{D}{i-1}$ and
  $t_1,\ldots,t_n\stdrule t$ is the  instance of a
  standard deduction rule.
\end{definition}

First, we prove that for subterm deduction systems, every proof $D$ of
$\sigma \models \css$ is localized by a set $T$ of DAG size linear in the
DAG size of $\css$.

\begin{lemma}{\label{lemma:localization}}
  If $\sigma$ is a ground substitution such that $\sigma\models\css$
  there exists $T\supseteq \Sub{\css}$ of size linear in
  \card{\scss}, a substitution $\tau$ of domain
  $\Var{T}\setminus\Var{\css}$ and a $(T,\sigma\cup\tau)$-maximal and
  $(\css,\sigma)$-compliant derivation localized by $T$ for $\sigma\cup\tau$.
\end{lemma}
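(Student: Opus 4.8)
The plan is to read $T$, $\tau$ and the required derivation off a well-structured proof of $\sigma\models\css$. First I would obtain such a proof. Starting from the all-receptions derivation — which is trivially $(\emptyset,\emptyset)$-maximal, $(\css,\sigma)$-compliant and free of standard rules — Lemma~\ref{lemma:existence:maximal:localization} with $T_1=\emptyset$, $T_2=\Sub{\css}$, $\sigma_2=\sigma$ produces a $(\Sub{\css},\sigma)$-maximal $(\css,\sigma)$-compliant derivation $D$ in which no term is deduced twice by a standard rule; by Lemma~\ref{:model|criteria} and $\sigma\models\css$, $D$ is a proof of $\sigma\models\css$. The crucial observation is then that \emph{the right-hand side $r_i$ of every standard decomposition step of $D$ is an instance $w\sigma$ of a non-variable subterm $w$ of $\Out{\css}$}: indeed $r_i\in\Sub{D}$, so were $r_i\notin(\Sub{\Out{\css}}\setminus\Variables)\sigma$, Lemma~\ref{lemma:variable|created|by|composition} would force $r_i$ to be first produced in $D$ by a standard composition rule at some index different from $i$, contradicting that no term is deduced twice by a standard rule. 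Since non-redundancy also makes the standard right-hand sides of $D$ pairwise distinct, $D$ has at most $\card{\scss}$ standard decomposition steps.

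Next I would fix $T$ using only $\css$ and the deduction system: $T:=\Sub{\css}$ together with, for each non-variable subterm $w$ of $\css$ and each decomposition rule $\hat\rho\colon \hat l_1,\dots,\hat l_n\to\hat r$, the \emph{premise skeletons} $L^{w,\hat\rho}_1,\dots,L^{w,\hat\rho}_n$ obtained from $\hat l_1,\dots,\hat l_n$ by the matching substitution identifying $\hat\rho$'s output with $w$ (in the usual presentations $\hat r$ is a rule variable, so this is always defined) and by renaming all remaining rule variables to fresh, pairwise distinct variables $x^{w,\hat\rho}_\bullet$. Because there are finitely many rules, each of bounded size, $\card{\Sub{T}}=O(\card{\scss})$, $T\supseteq\Sub{\css}$, and $\Var{T}\setminus\Var{\css}$ consists exactly of the $x^{w,\hat\rho}_\bullet$. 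I would then define $\tau$ on this domain by: if $D$ has a decomposition step instantiating $\hat\rho$ with right-hand side $w\sigma$ — by the observation above and non-redundancy there is at most one, say with instance $\theta$ — put $x^{w,\hat\rho}_\bullet\tau:=(\bullet)\theta$; otherwise let $x^{w,\hat\rho}_\bullet\tau$ be a constant outside $\Nonces$ that is fresh for $\css$, for $D$ and for the rest of $\img{\tau}$. Finally, Lemma~\ref{lemma:existence:maximal:localization} applied with $T_1=\Sub{\css}$, $\sigma_1=\sigma$, $T_2=T$, $\sigma_2=\tau$ turns $D$ into a $(T,\sigma\cup\tau)$-maximal $(\css,\sigma)$-compliant derivation $D'$, still without any term deduced twice by a standard rule and agreeing with $D$ on every deduction whose right-hand side lies in $\Sub{\css}\sigma$; in particular $D'$ remains a proof of $\sigma\models\css$, so the crucial observation holds for $D'$ as well.

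It remains to show $T$ localizes $D'$ for $\sigma\cup\tau$. Let $D'[i]$ be a standard rule with right-hand side $r_i$ and $t\in\Sub{T}\setminus\Variables$ with $t(\sigma\cup\tau)=r_i$. If $D'[i]$ is a composition $f(l_1,\dots,l_m)=r_i$, then $t$ has root $f$, so $t=f(t_1,\dots,t_m)$ with $t_1,\dots,t_m\in\Sub{t}\subseteq\Sub{T}$, $t_k(\sigma\cup\tau)=l_k\in\RHSset{D'}{i-1}$, and $t_1,\dots,t_m\to t$ is an instance of the composition rule. If $D'[i]$ is a standard decomposition instantiating $\hat\rho$, then by the crucial observation $r_i=w\sigma$ for a non-variable $w\in\Sub{\css}$; I re-choose $t:=w$ (legitimate, since the localization clause only requires \emph{some} non-variable $T$-subterm with image $r_i$) and take $t_k:=L^{w,\hat\rho}_k\in\Sub{T}$. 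By construction $L^{w,\hat\rho}_1,\dots,L^{w,\hat\rho}_n\to w$ is an instance of $\hat\rho$, and the definition of $\tau$ is tailored precisely so that $L^{w,\hat\rho}_k(\sigma\cup\tau)$ equals the $k$-th premise $\hat l_k\theta$ of $D'[i]$, hence lies in $\RHSset{D'}{i-1}$. Nonce-creation steps are localized by their (zero-premise) rule, and reception steps are not standard. The main obstacle is the interplay between the last extension and localization: one has to rule out that passing from $D$ to $D'$ creates a standard decomposition step whose rule/subterm pair $(w,\hat\rho)$ carries a fresh-constant value of $\tau$, for which the lift just described would fail. This is handled by performing the extension and the assignment of $\tau$'s ``real'' values together and iterating finitely often — there being finitely many rule/subterm pairs — which also forces one to check that assigning $x^{w,\hat\rho}_\bullet$ the value $(\bullet)\theta$ does not spoil $(T,\sigma\cup\tau)$-maximality, the newly relevant terms being subterms of premises already present in $D$.
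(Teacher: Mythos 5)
Your overall skeleton parallels the paper's proof: two applications of Lemma~\ref{lemma:existence:maximal:localization} around an enlarged set $T\supseteq\Sub{\css}$, with Lemma~\ref{lemma:variable|created|by|composition} supplying the key fact that standard decompositions produce instances of non-variable subterms of $\Out{\css}$. The genuine problem lies in building $T$ and $\tau$ \emph{a priori} from rule/subterm pairs $(w,\hat\rho)$. The paper's subterm deduction systems allow decomposition rules $\hat l_1,\dots,\hat l_n\to\hat r$ whose right-hand side $\hat r$ is an arbitrary subterm of the left-hand side, not necessarily a variable, and your crucial observation only gives $w\sigma=\hat r\theta$; this does not make $w$ a syntactic instance of $\hat r$ (take $\hat r=\pair{a}{x}$ and $w=\pair{y}{b}$ with $y\sigma=a$, $x\theta=b$), so ``the matching substitution identifying $\hat\rho$'s output with $w$'' may simply not exist, and then your premise skeletons and $\tau$ are undefined. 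You flag this yourself but dismiss it by assuming variable right-hand sides, which restricts the lemma's stated generality. The paper sidesteps the issue by reading the lift off $D$ rather than off $\css$: for each decomposition actually occurring in $D$ it picks a rule $L_i\to R_i$ and a ground matcher $\tau_i$ with $L_i\tau_i\to R_i\tau_i$ equal to that step, renames the rule variables apart, puts $\Sub{L_i}\cup\Sub{R_i}$ into $T$ and $\bigcup_i\tau_i$ into $\tau$, and gets the linear size bound from Lemma~\ref{lemma:variable|created|by|composition} (at most $\card{\Sub{\Out{\css}}}$ decompositions) instead of from enumerating rule/subterm pairs.

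Your closing paragraph also leaves the proof unfinished: you acknowledge that a decomposition step of $D'$ might correspond to a pair whose $\tau$-values are fresh constants, and you only gesture at an undefined iteration together with an unchecked preservation of maximality. In fact no iteration is needed, and the facts you already state close this gap: in the second application of Lemma~\ref{lemma:existence:maximal:localization} you took $T_1=\Sub{\css}$ and $\sigma_1=\sigma$, so a deduction whose right-hand side lies in $\Sub{\css}\sigma$ occurs in $D'$ iff it occurs in $D$; combined with your crucial observation applied to $D'$ (every decomposition of $D'$ has right-hand side in $(\Sub{\Out{\css}}\setminus\Variables)\sigma$), every decomposition of $D'$ already occurs in $D$ and hence carries the ``real'' values of $\tau$. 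This is exactly the paper's argument, phrased there as: a deduction of $D'$ not occurring in $D$ has right-hand side outside $\Sub{\css}\sigma$ and must therefore, by Lemma~\ref{lemma:variable|created|by|composition}, be a composition. (Your re-choice of $t:=w$ in the localization check is consistent with how the paper itself discharges Definition~\ref{def:localization}, and is harmless once Lemma~\ref{lemma:localization:one-to-one} makes the substitution injective on $\Sub{T}$.)
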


\begin{proof}

  By Lemma~\ref{lemma:existence:maximal:localization} applied with
  $T_1=\emptyset$, $T_2=\Sub{\css}$,
  $\sigma_1=\emptyset, \sigma_2=\sigma$, and $D_0$ the $(\css,\sigma)$-compliant derivation
  that has no standard deductions, there exists a
  $(\Sub{\css},\sigma)$-maximal
  $(\css,\sigma)$-compliant derivation $D$ in which no
  term is deduced twice by a standard deduction. From now on we
  let $T_0=\Sub{\css}$.

  Let $\set{l_i\to r_i}_{1\le i\le n}$ be the set of decompositions in
  $D$, and $\set{(L_i\to R_i,\tau_i)}_{1\le i\le n}$ be a set of
  decomposition rules and ground substitutions such that for all $1\le
  i\le n$ we have $L_i\tau_i\to R_i\tau_i=l_i\to r_i$. Since no term
  in $D$ is deduced twice by a standard deduction, by
  \Cref{lemma:variable|created|by|composition} we have $n\leq
  \card{\Sub{\Out{S}}}$.

   Modulo variable renaming we may assume that $i\neq j$ implies
   $\dom{\tau_i}\cap\dom{\tau_j}=\emptyset$, and thus that
   $\tau=\bigcup_{i=1}^n \tau_i$ is defined on $T_1 = \bigcup_{i=1}^n
   (\Sub{L_i}\cup\Sub{R_i})$. Note that the size of $T_1$ is
   bounded by $M\times \card{\Sub{\Out{\css}}}$, where $M$ is the
   maximal size of a decomposition rule belonging to the deduction
   system.

   Let $T=T_0\cup T_1$ and, noting that these substitutions are
   defined on non-intersecting domains, let
   $\sigma'=\sigma\cup\tau$. By construction
   $\card{T}\le (M+1)\times \card{\Sub{{\css}}}$.

  By Lemma~\ref{lemma:existence:maximal:localization} there exists a
  $(\css,\sigma)$-compliant derivation $D'$ which is
  $(T,\sigma')$-maximal and such that every deduction of a term in
  $T_0\sigma$ that occurs in $D$ also occurs in $D'$
  and no term is deduced twice in $D'$ by a standard deduction.
 
  Let $l\to r$ be a deduction in $D'$ which does not appear in $D$.
  Since $D$ is $(T_0,\sigma)$-maximal we have $r\notin
  \Sub{T_0}\sigma$, and thus $r\notin \Sub{\Out{\css}}\sigma$. Since
  no term is deduced twice in $D'$ by
  Lemma~\ref{lemma:variable|created|by|composition} this deduction
  must be a composition.

  Let us prove $D'$ is $(T,\sigma')$-localized. By definition of
  composition rules, every composition that deduces a term $t\sigma'$ with 
  $t\in\Sub{T}\setminus\Var{T}$ has a left-hand side $t_1\sigma',\dots,t_k\sigma'$ 
  with $t_1,\dots,t_k\in\Sub{T}$ and $t_1,\dots,t_k\to t$ is an instance of a composition rule.
  By the preceding paragraph every
  decomposition in $D'$ occurs in $D$ and thus by construction has its
  left-hand side in $T_1\sigma'$ which was previously built in $D$ 
  and is an instance of some $L_i \to R_i$ such that $\Sub{L_i\cup\set{R_i}}\subseteq T_1\subseteq T$.

  Thus every deduction whose right-hand side is in
  $(\Sub{T}\setminus\Var{T})\sigma'$ has its left-hand side in
  $\Sub{T}\sigma'$, and thus $D'$ is localized by $T$ for $\sigma'$.
\end{proof}

%\subsection{One-to-one localization}

We prove now that to solve constraint systems one can first guess
equalities between terms in $T$ and then solve constraint systems
without variables.  The guess of equalities is correct wrt a solution
$\sigma$ if terms in $T$ that have the same instance by $\sigma$ are
syntactically equal.  We characterize these guesses as follows.

\begin{definition}[One-to-one localizations]\label{def:one:to:one}
  A set of terms $T$ \emph{one-to-one localizes} a derivation $D$ for
  a ground substitution $\sigma$ if $\sigma$ is injective on $\Sub{T}$
  and $T$ localizes $D$ for $\sigma$.
\end{definition} 

In Lemma~\ref{lemma:localization:one-to-one} we prove that once equalities between variables are correctly
guessed there exists a one-to-one localization of a maximal proof $D$.
% Since the proof uses some generic properties of subterms and most general unifier,
% we first state them. 

\begin{lemma} \label{:mgu|is|a|unit} Let $T$ be a set of terms such
  that $T=\Sub{T}$, $\sigma$ be a ground substitution defined on
  $\vars{T}$, $U=\condset{p\possiblyequal q }{ p,q\in T \wedge
    p\sigma=q\sigma}$ be a unification system and $\theta$ be its most
  general idempotent unifier with $\vars{\img{\theta}}\subseteq
  \vars{U}$.  Then for any term $t$, $t\theta\sigma = t\sigma$.
\begin{proof}
 Let us show $\forall x\in\vars{T}, x\sigma = x\theta\sigma$. Note that this trivially holds if $x\theta = x$.
 Thus we consider case  $x\theta \neq x$.

 Since $U$ contains all equations ${p\possiblyequal p}$ for $p\in\Sub{T}=T$, we have  $\Sub{T}=\Sub{U}$. 
 From the idempotency of $\theta$  ($\forall y\in\vars{U}, y\theta\theta = y\theta$), we get
 $\forall y\in\vars{\img{\theta}}, y\theta = y$.  %\exists x:  x\theta=f(y); x\theta\theta=x\theta; f(y)\theta=f(y); y\theta=y

 As $\sigma$ is evidently a unifier of $U$, there exists a substitution $\tau$ such that $\sigma= \theta\tau$
 Therefore, $y\sigma=y\theta\tau = y\tau$, i.e. 
 $y\sigma= y\tau$ for all  $y\in\vars{\img{\theta}}$.
 Thus, for any $x\in\vars{T}$, $x\theta \sigma = x\theta\tau = x\sigma$.
 
 Consequently, %Hence
 for any term $t$ %$\forall t\in \Sub{T}, 
 we have $t\sigma=t\theta\sigma$.
\end{proof}

\end{lemma}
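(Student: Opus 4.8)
The plan is to peel the claim down to the action of $\theta$ and $\sigma$ on variables and then invoke the universal factorisation property of a most general unifier. The first step is to note that $\sigma$ itself is a unifier of $U$: every equation of $U$ has the form $p\possiblyequal q$ with $p,q\in T$ and $p\sigma=q\sigma$, so $\sigma$ satisfies $U$ by the very definition of $U$. Hence, since $\theta=\mgu{U}$, there is a substitution $\tau$ with $\sigma=\theta\tau$.

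The second step is to combine this factorisation with the idempotency of $\theta$ (so $\theta\theta=\theta$) and the associativity of substitution composition. For an arbitrary term $t$ this yields
\[
  t\theta\sigma \;=\; t\theta(\theta\tau) \;=\; t(\theta\theta)\tau \;=\; t\theta\tau \;=\; t(\theta\tau) \;=\; t\sigma,
\]
which is exactly the statement. One can equivalently run the argument variable-by-variable: for $x\in\vars{T}$ the equality is trivial when $x\theta=x$, and when $x\theta\neq x$ it follows from $x\theta\sigma=x\theta\theta\tau=x\theta\tau=x\sigma$; variables outside $\vars{T}$ are fixed by $\theta$ since $\dom\theta\subseteq\vars{U}\subseteq\vars{T}$, and then the general case for an arbitrary term $t$ follows because a substitution is determined by its values on variables.

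The only delicate point — the closest thing to an obstacle — is the justification of $\theta\theta=\theta$ precisely at the level we use it, equivalently the standard fact that for an idempotent substitution the sets $\dom\theta$ and $\vars{\img\theta}$ are disjoint, so that $\theta$ fixes every variable occurring in its image; everything else is bookkeeping. I would also remark that the extra hypotheses $T=\Sub{T}$ and $\vars{\img\theta}\subseteq\vars{U}$ are not actually needed for this argument: they are recorded because downstream lemmas that consume this one rely on them (e.g. to guarantee that replacing $\sigma$ by $\theta\sigma$ introduces no fresh variables), so I would simply carry them along without invoking them here.
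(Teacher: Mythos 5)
Your proof is correct and follows essentially the same route as the paper's: observe that $\sigma$ unifies $U$ by construction, factor $\sigma=\theta\tau$ through the most general unifier, and use idempotency of $\theta$ to collapse $t\theta\sigma=t\theta\theta\tau=t\theta\tau=t\sigma$; the paper merely phrases the same argument variable-by-variable, via the fact that $\theta$ fixes the variables of its image. Your closing remarks (that $\theta\theta=\theta$ is the only point needing care, and that $T=\Sub{T}$ and $\vars{\img{\theta}}\subseteq\vars{U}$ are carried along for later lemmas rather than used here) are accurate.
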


\begin{lemma}\label{:mgu|sub} 	%\cmnt{todo: define unification system?}
 Let $U$ be a unification system and $\theta=\mgu{U}$ an idempotent most general unifier with $\vars{\img{\theta}}\subseteq \vars{U}$. 
 Then $\forall p\in \Sub{\img{\theta}}$ $\exists q\in\Sub{U} : p=q\theta$.
\begin{proof}
 The case where $p\in\vs$ is trivial, since $\vars{\img{\sigma}}\subseteq \vars{U}$ and we can take $q=p$.
 Otherwise, suppose that $p\in\Sub{x\theta | x\in\dom{\theta}}\setminus \vs$ is such that $\forall q\in\Sub{U}$ $p\neq q\theta$.
 Let  $z$ be a fresh variable. %\in\Sub{U}\cap\UniAt{C}$. \todo{if exists!!}
 Let $\theta'= \condset{x\mapsto \repl{(x\theta)}{p\mapsfrom z}}{ x\in\dom{\theta}}$.
 Let us denote the height of a term $t$ by $\termheight{t}$, and a subterm of $t$ at position $l$ by $\position{t}{l}$ and 
 the set of all positions in $t$ by $\position{t}{}$.

 Let us prove that $\forall u,v\in\Sub{U}~~ u\theta=v\theta \implies u\theta'=v\theta'$.
\begin{itemize}
  \item If $u,v\in\vs$ then the statement is true by definition.
 \item If, w.l.o.g., $u\in\vs$ but $v \notin \vs$.
   Then for any $l\in\position{v}{}$ we have $\position{(u\theta)}{l}=(\position{v}{l})\theta$.
   Since $\position{v}{l}\in\Sub{U}$ we get $\position{(u\theta)}{l} \neq p$ (as we took such $p$ that  $\forall q\in\Sub{U}$ $p\neq q\theta$), and therefore, $(\position{v}{l})\theta\neq p$.
   Thus, $\repl{(v\theta)}{p\mapsfrom z}=v\theta'$.
    Therefore, $u\theta'=\repl{(u\theta)}{p\mapsfrom z} = \repl{(v\theta)}{p\mapsfrom z} = v\theta'$.
 \item If $u=f(u_1,\dots,u_k)\wedge v=g(v_1,\dots,v_m)$ then $f=g$, $m=k$ and for all $i\leq k$ we have $v_i\theta=u_i\theta$.
  It is enough to prove that $v_i\theta'=u_i\theta'$.
  Let us prove this case by induction on $\min(\termheight{u},\termheight{v})$. For the basis of induction,  we have that either $u_i\in\vs$ or $v_i\in\vs$ 
  (otherwise the basis is not minimal) and we have proved already that $v_i\theta=u_i\theta\implies v_i\theta'=u_i\theta'$.
  Suppose the statement is true for $\min(\termheight{u},\termheight{v})\leq n$. For $\min(\termheight{u},\termheight{v})= n+1$
   we have $\termheight{u_i\theta}\leq n$, $\termheight{v_i\theta}\leq n$ and $u_i\theta=v_i\theta$  for all $i$. 
   Then by induction supposition and two cases considered before we have $u_i\theta'=v_i\theta'$.% and thus, $u\theta'=v\theta'$.
\end{itemize}

   Thus, $\forall u,v\in\Sub{U} u\theta=v\theta\implies u\theta'=v\theta'$,
   i.e. $\theta'$ is a unifier of $U$. Moreover, for all $x\in\dom{\theta}, x\theta=(x\theta')\gamma$, where $\gamma=\set{z\mapsto p}$.
   
   Since $\theta$ is a most general unifier, we have $p\in\vs$ which contradicts to $p\in\Sub{x\theta | x\in\dom{\theta}}\setminus \vs$.

%     Therefore, $\forall p\in \Sub{\img{\theta}}$ $\exists q\in\Sub{U}$ such that $p=q\theta$.
 
\end{proof}
\end{lemma}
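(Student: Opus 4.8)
The plan is to argue by contradiction: from a subterm $p$ of $\img{\theta}$ that is not of the form $q\theta$ with $q\in\Sub{U}$, I would build a unifier $\theta'$ of $U$ from which minimality of $\theta$ yields a contradiction. Suppose the conclusion fails, so there is $p\in\Sub{\img{\theta}}$, say $p\in\Sub{x_0\theta}$ for some $x_0\in\dom{\theta}$, with $p\neq q\theta$ for every $q\in\Sub{U}$. If $p\in\vs$ we are done at once: then $p\in\vars{\img{\theta}}\subseteq\vars{U}\subseteq\Sub{U}$, and idempotency of $\theta$ gives $p\theta=p$, so $q:=p$ works. Hence $p\notin\vs$. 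Now choose a variable $z$ that occurs neither in $U$ nor in $\img{\theta}$, and define $\theta'$ with $\dom{\theta'}=\dom{\theta}$ by replacing, in every image $x\theta$, each occurrence of $p$ by $z$; write $x\theta'=(x\theta)[p\mapsfrom z]$. This is unambiguous since the occurrences of $p$ in a term sit at pairwise incomparable positions (no term is a proper subterm of itself), and, $z$ being fresh, substituting $p$ back for $z$ undoes the replacement: $x\theta=(x\theta')[z\mapsto p]$ for all $x\in\dom{\theta}$.

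The heart of the argument — and the step I expect to require the most care — is showing that $\theta'$ is again a unifier of $U$. I would establish this by first proving, by structural induction on $t$, that $(t\theta)[p\mapsfrom z]=t\theta'$ for every $t\in\Sub{U}$. The decisive observation, which is exactly where the hypothesis on $p$ enters, is that $p$ cannot occur in $t\theta$ at any position that already belongs to $t$: for such a position $l$ one has $(t\theta)|_l=(t|_l)\theta$ and $t|_l\in\Sub{U}$, so $(t\theta)|_l\neq p$ by the choice of $p$ — and this covers the variable positions of $t$ as well, since $y\theta\neq p$ for $y\in\vars{t}\subseteq\Sub{U}$. Consequently every occurrence of $p$ in $t\theta$ is buried strictly inside an instantiated variable, the replacement $[p\mapsfrom z]$ leaves the non-variable skeleton of $t$ intact, and descending into the arguments yields $(t\theta)[p\mapsfrom z]=t\theta'$. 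Applying this to both sides of each equation $u\possiblyequal v$ of $U$ and using $u\theta=v\theta$, we get $u\theta'=(u\theta)[p\mapsfrom z]=(v\theta)[p\mapsfrom z]=v\theta'$; thus $\theta'\models U$. (One can also organise this as the paper does, with a case split on whether $u,v$ are variables and an induction on $\min(\termheight{u},\termheight{v})$, but the point is the same non-occurrence fact.)

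It remains to contradict the minimality of $\theta$. Since $\theta'$ is a unifier of $U$ and $\theta=\mgu{U}$, there is $\rho$ with $\theta'=\theta\rho$, so $(x\theta)\rho=x\theta'=(x\theta)[p\mapsfrom z]$ for $x\in\dom{\theta}$. Composing with $[z\mapsto p]$ and using the undoing identity $x\theta=(x\theta')[z\mapsto p]$, we find that $\rho$ followed by $[z\mapsto p]$ fixes every term of $\img{\theta}$, hence every variable $y\in\vars{\img{\theta}}$; as $p\notin\vs$ and $z\notin\vars{\img{\theta}}$, the only way $y\rho$ can be sent to the variable $y$ by $[z\mapsto p]$ is $y\rho=y$. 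Therefore $\rho$ fixes $\vars{\img{\theta}}$, so $(x\theta)\rho=x\theta$ and hence $(x\theta)[p\mapsfrom z]=x\theta$ for all $x\in\dom{\theta}$ — meaning $p$ occurs in no image of $\theta$, contradicting $p\in\Sub{x_0\theta}$. So $p\in\vs$, which proves the lemma. (Equivalently: $\theta'$ is then itself a most general unifier, so $\theta$ and $\theta'$ are variants and $x_0\theta$, $x_0\theta'$ must carry equally many function symbols, which is impossible once the non-variable $p$ has been erased from $x_0\theta$.)
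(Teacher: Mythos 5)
Your proof is correct and follows essentially the same route as the paper's: assume a non-variable $p\in\Sub{\img{\theta}}$ with $p\neq q\theta$ for all $q\in\Sub{U}$, replace $p$ by a fresh variable $z$ in the images of $\theta$ to get $\theta'$, show $\theta'$ unifies $U$ via the key observation that positions coming from $\Sub{U}$ cannot instantiate to $p$, and contradict the mgu property. Your spelled-out final step (factoring $\theta'=\theta\rho$, composing with $\set{z\mapsto p}$, and showing $\rho$ fixes $\vars{\img{\theta}}$) is in fact a welcome elaboration of the paper's rather terse concluding sentence.
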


\begin{lemma}\label{lemma:localization:one-to-one}
  Let $\css$ be a constraint system, $\sigma$ be a ground
  substitution such that $\sigma \models \css$.

  Then there exists
  a set of terms $T$, 
 a substitution $\tau$ of domain $\Var{T}\setminus \Var{\css}$,
 a substitution $\theta$ and 
 a $(\css\theta,\sigma)$-compliant derivation $D$ %was $(\css,\sigma)$-compliant
 such that 
 \begin{itemize*}
 \item $D$ is $(T,\sigma\cup\tau)$-maximal and one-to-one localized by $T$ for  $\sigma\cup\tau$
 \item $\sigma\cup\tau = \theta(\sigma\cup\tau)$
 \item $\Sub{\css\theta}\subseteq T$
 \item $T$ and $\theta$ of size linear in $\card{\Sub{\css}}$ 
\end{itemize*}

\begin{proof}
  Under the same assumptions, by Lemma~\ref{lemma:localization}, there
  exists $T_0\supseteq \scss$ of size linear in \card{\Sub{\css}} and $\tau$ of
  domain $\Var{T_0}\setminus\Var{\css}$ such that there exists a
  $(T_0,\sigma\cup\tau)$-maximal and $(\css,\sigma)$-compliant
  derivation $D$ which is localized by $T_0$ for the same substitution
  $\sigma'=\sigma\cup\tau$. 

  Let $\mathcal{U}=\condset{ t\unif t'}{t,t'\in \Sub{T_0}
    \text{ and }t\sigma'=t'\sigma'}$. The unification system
  $\mathcal{U}$ has a unifier $\sigma'$ and thus has a most
  general solution $\theta$. 
  By \Cref{:mgu|is|a|unit}, $\sigma' = \theta\sigma'$.

  Let $T=\Sub{T_0}\theta$.

  Since $\Sub{\css}\subseteq T_0$ we have $\Sub{\css\theta}\subseteq
  \Sub{T_0\theta}$.
  Since $\theta$ is a most general unifier of $\mathcal{U}$
  and $\Sub{\mathcal{U}}= \Sub{T_0}$ we have 
  $\Sub{T_0\theta} = \Sub{T_0}\theta$ by \Cref{:mgu|sub}.
  This implies 
  \textit{(i)} $\Sub{\css\theta}\subseteq T$, 
  \textit{(ii)} $\theta$ is of linear size on $\card{\Sub{T_0}}$ and thus on $\card{\Sub{\css}}$,
  and \textit{(iii)} $T$ is of linear size on $\card{\Sub{\css}}$.
  Moreover, as  $\sigma' = \theta\sigma'$ we have $\Sub{T}\sigma' = \Sub{T_0}\sigma'$ and thus
  from $D$ is $(T_0,\sigma')$-maximal follows $D$ is $(T,\sigma')$-maximal.

  Assume there exists $t,t' \in \Sub{T}$ such that
  $t\sigma'=t'\sigma'$ but $t\neq t'$. 
  Since $T=\Sub{T_0\theta}$ there exists $t_0,t_0'\in \Sub{T_0}$ such
  that $t_0\theta \neq t_0'\theta$ but $t_0\theta\sigma' =
  t_0'\theta\sigma'$.
  From $\sigma' = \theta\sigma'$
  we have an
  existence of  $t_0,t_0'\in \Sub{T_0}$ such
  that $t_0\theta \neq t_0'\theta$ but $t_0\sigma' =
  t_0'\sigma'$.
  This contradicts the fact that  $\theta$ 
  satisfies $\mathcal{U}$.  
  
  %added
  Finally, from $D$ is $(\css,\sigma)$-compliant and $\sigma = \theta\sigma$ we have $D$ is $(\css\theta, \sigma)$-compliant.
\end{proof}
\end{lemma}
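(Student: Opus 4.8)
The plan is to post-process the derivation furnished by \Cref{lemma:localization} by quotienting its localizing set of terms by the identifications that the solution already makes. By \Cref{lemma:localization} there are a set $T_0\supseteq\Sub{\css}$ of DAG size linear in $\card{\Sub{\css}}$, a substitution $\tau$ of domain $\Var{T_0}\setminus\Var{\css}$, and a $(T_0,\sigma')$-maximal, $(\css,\sigma)$-compliant derivation $D$ localized by $T_0$ for $\sigma':=\sigma\cup\tau$. I form the unification system $\mathcal U=\condset{t\unif t'}{t,t'\in\Sub{T_0}\text{ and }t\sigma'=t'\sigma'}$; since $\sigma'$ unifies $\mathcal U$ there is an idempotent $\theta=\mgu{\mathcal U}$ with $\vars{\img{\theta}}\subseteq\vars{\mathcal U}$, and since $\mathcal U$ contains every reflexive pair $p\unif p$ we have $\Sub{\mathcal U}=\Sub{T_0}$. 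The witnesses are $T:=\Sub{T_0}\theta$, the substitution $\tau$ (cut down to $\Var{T}\setminus\Var{\css}$), $\theta$, and the same derivation $D$.

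The algebraic bookkeeping is routine. By \Cref{:mgu|is|a|unit} applied to the subterm-closed set $\Sub{T_0}$ we get $u\theta\sigma'=u\sigma'$ for every term $u$; for $u$ a variable this reads $\sigma'=\theta\sigma'$, which is the required $\sigma\cup\tau=\theta(\sigma\cup\tau)$, and for $u$ ranging over the terms of $\css$ it gives $\css\theta\sigma'=\css\sigma'$. By \Cref{:mgu|sub} together with $\Sub{\mathcal U}=\Sub{T_0}$, every subterm of $\img{\theta}$ is of the form $q\theta$ with $q\in\Sub{T_0}$, which yields $\Sub{T_0\theta}=\Sub{T_0}\theta=T$; hence $T=\Sub{T}$, and $\Sub{\css\theta}\subseteq\Sub{T_0\theta}=T$ (using $\Sub{\css}\subseteq T_0$), and both $T$ and $\theta$, being instances by $\theta$ of the linear-size $T_0$, are of size linear in $\card{\Sub{\css}}$. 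From $\sigma'=\theta\sigma'$ we get $\Sub{T}\sigma'=\Sub{T_0}\sigma'$, so the $(T,\sigma')$-maximality requirement coincides with the $(T_0,\sigma')$-maximality already proved and $D$ is $(T,\sigma')$-maximal; and $\css\theta\sigma'=\css\sigma'$ makes the emissions of $D$ match $\css\theta$ through the same mapping $\alpha$, so $D$ is $(\css\theta,\sigma)$-compliant (the domains fit since $\Var{\css\theta}\subseteq\Var{T}$).

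The substance is the ``one-to-one localized by $T$ for $\sigma'$'' clause. Injectivity of $\sigma'$ on $\Sub{T}$: if $t,t'\in\Sub{T}=\Sub{T_0}\theta$ have $t\sigma'=t'\sigma'$, write $t=t_0\theta$, $t'=t_0'\theta$ with $t_0,t_0'\in\Sub{T_0}$; then $t_0\sigma'=t_0\theta\sigma'=t\sigma'=t'\sigma'=t_0'\sigma'$, so $(t_0\unif t_0')\in\mathcal U$ and hence $t=t_0\theta=t_0'\theta=t'$ because $\theta$ satisfies $\mathcal U$. Localization of $D$ by $T$ for $\sigma'$: at a standard step $D[i]=l_i\stdrule r_i$ let $t$ be the (by injectivity, unique) non-variable term in $\Sub{T}$ with $t\sigma'=r_i$, and pick a non-variable $s\in\Sub{T_0}$ with $s\theta=t$ --- such an $s$ exists because if some variable $y$ had $y\theta=t$ non-variable, then $y\in\dom{\theta}$ would have been unified by $\theta$ with a non-variable element $s\in\Sub{T_0}$ (the equations and all their subproblems stay inside the subterm-closed set $\Sub{T_0}$), whence $s\theta=y\theta=t$. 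Then $s\sigma'=s\theta\sigma'=r_i$, and since $s\in\Sub{T_0}$ is non-variable the $T_0$-localization of $D$ gives $s_1,\dots,s_n\in\Sub{T_0}$ with $\set{s_1\sigma',\dots,s_n\sigma'}\subseteq\RHSset{D}{i-1}$ and $s_1,\dots,s_n\stdrule s$ an instance of a standard rule $L_1,\dots,L_n\stdrule R$; applying $\theta$ gives $s_1\theta,\dots,s_n\theta\stdrule s\theta=t$, again an instance of that rule, with $s_j\theta\in\Sub{T_0}\theta=\Sub{T}$ and $s_j\theta\sigma'=s_j\sigma'\in\RHSset{D}{i-1}$. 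Thus $T$ one-to-one localizes $D$ for $\sigma'$, and $D$ is $(\css\theta,\sigma)$-compliant as already noted.

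I expect the localization transfer to be the main (indeed the only delicate) step: once $\sigma'=\theta\sigma'$ and $\Sub{T_0\theta}=\Sub{T_0}\theta$ are established, the size bounds, $(T,\sigma')$-maximality, compliance and the factorisation all follow at once from \Cref{lemma:localization}, \Cref{:mgu|is|a|unit} and \Cref{:mgu|sub}; the work is in checking that the term deduced at each relevant standard step has a non-variable $\theta$-preimage in $\Sub{T_0}$, so that the $T_0$-localization can be invoked and then pushed through $\theta$, plus a little care with the domain of $\tau$ and the precise reading of ``$(\css\theta,\sigma)$-compliant''.
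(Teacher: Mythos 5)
Your proof is correct and follows essentially the same route as the paper's: invoke Lemma~\ref{lemma:localization} to get $T_0$, $\tau$ and a localized maximal derivation, take $\theta=\mgu{\mathcal U}$ for $\mathcal U=\condset{t\unif t'}{t,t'\in\Sub{T_0},\ t\sigma'=t'\sigma'}$, set $T=\Sub{T_0}\theta$, and use Lemmas~\ref{:mgu|is|a|unit} and~\ref{:mgu|sub} for $\sigma'=\theta\sigma'$, the size bounds, maximality, injectivity and compliance. The only difference is that you explicitly verify the transfer of the localization property from $T_0$ to $T$ (via a non-variable $\theta$-preimage in $\Sub{T_0}$, which indeed follows by the same replacement argument as in Lemma~\ref{:mgu|sub}), a step the paper's proof leaves implicit.
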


\subsection{Milestone sequence}
\label{subsubsec:derivation:milestones}

In addition to retrace the deduction steps performed in $D$ we want to
track which terms relevant to $\css$ are deduced in $T$, and in which
order.

\begin{definition}[Milestone sequence\label{def:milestone}]
  Let $T$ be a set of terms and $\sigma$ be a ground substitution.  We
  say that $\vec T$ is the $(T,\sigma)$-milestone sequence of a
  derivation $D=(l_i \to r_i)_{1\le i \le m}$ if $\vec T = t_1,\ldots
  , t_n$ is a sequence of maximal length in which each $t_i$ is either
  of the form $\to t$ or of the form $\hsend{t}$, with $t\in\Sub{T}$
  and there exists a strictly increasing function $\alpha:
  \set{1,\ldots,n}\to\set{1,\ldots,m}$ such that for every $1\le i\le
  n$ we have:
  \begin{enumerate}
  \item if $\vec T[i]=\hsend{t}$ then $D[\alpha(i)] = \hsend{t\sigma}$; 
  \item if $\vec T[i] = \to t$ then $D[\alpha(i)] = l_i\to t\sigma$ is a standard
    deduction rule; 
  \end{enumerate}
\end{definition}

\begin{lemma}\label{:all|vars|in|milestone}
 Let $\sigma\models \css$, $T\supseteq \scss$ and $\sigma'$ be an extension of $\sigma$ on $\vars{T}$.
 Let $D$ be $(T,\sigma')$-maximal derivation one-to-one localized by $T$ for $\sigma'$.
 Let $\vec T$ be a $(T,\sigma')$-milestone sequence. 
 \textbf{Then} for any $i$ for any $x\in\vars{\vec T[i]}$ there exists $j<i$ such that $\vec T[j] = \to x$. 
\end{lemma}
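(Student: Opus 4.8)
The plan is to argue by strong induction on $i$. Fix $i$ and $x\in\vars{\vec T[i]}$; write $\beta$ for the strictly increasing map witnessing that $\vec T$ is the $(T,\sigma')$-milestone sequence of $D$, and $\alpha$ for the strictly increasing map witnessing $(\css,\sigma)$-compliance of $D$ (the derivations to which this lemma is applied, such as those produced by \Cref{lemma:localization:one-to-one}, are compliant; since moreover $\sigma\models\css$ and $D$ is $(\scss,\sigma)$-maximal, $D$ is a proof of $\sigma\models\css$ by \Cref{:model|criteria}). By \Cref{def:milestone}, $\vec T[i]$ is $\to t$ or $\hsend{t}$ for some $t\in\Sub T$, so $x\in\vars t\subseteq\Sub T$; and since $\sigma'$ is injective on $\Sub T$, if $t$ is not the variable $x$ itself then $x\sigma'$ is a proper subterm of the term produced or received at position $\beta(i)$ of $D$, whereas a bare milestone $\to x$ only introduces $x$ and is read as not contributing $x$ to $\vars{\vec T[i]}$. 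It therefore suffices to establish the claim that there is some $i'<i$ with $\vec T[i']\in\{\to t',\hsend{t'}\}$ and $x\in\vars{t'}$: the induction hypothesis at $i'$ then yields $j<i'<i$ with $\vec T[j]=\to x$, unless $\vec T[i']=\to x$ already and we take $j=i'$.

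To prove the claim, split on the shape of $\vec T[i]$. First suppose $\vec T[i]=\hsend{t}$. Then $D[\beta(i)]=\hsend{t\sigma'}$ is a reception, which by $(\css,\sigma)$-compliance corresponds to a constraint $\css[k]=\snd{}{t_k}$ with $t_k\sigma'=t\sigma'$; as $t_k\in\Sub\css\subseteq\Sub T$, injectivity of $\sigma'$ on $\Sub T$ forces $t_k=t$, so $\css[k]=\snd{}{t}$ with $x\in\vars t$. By Origination there is $k_0<k$ with $\css[k_0]=\rcv{}{t'}$ and $x\in\vars{t'}$. Since $D$ is a proof of $\sigma\models\css$, some $j_0<\Next{D}{\alpha(\previous{k_0})}$ satisfies $D[j_0]=l\anyrule t'\sigma'$; and from $\previous{k_0}\le k_0<k$ and the monotonicity of $\alpha$ on send indices we obtain $\alpha(\previous{k_0})<\alpha(k)=\beta(i)$, hence $\Next{D}{\alpha(\previous{k_0})}\le\beta(i)$, so $j_0<\beta(i)$. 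As $t'\in\Sub\css\subseteq\Sub T$, the step $D[j_0]$ is one of the events recorded by $\vec T$, at an index $i'$ with $\beta(i')=j_0<\beta(i)$ and hence $i'<i$; by injectivity that event is exactly $\to t'$ or $\hsend{t'}$, and $x\in\vars{t'}$. Next suppose $\vec T[i]=\to t$ with $t\notin\vs$. Then $D[\beta(i)]$ is a standard deduction with right-hand side $t\sigma'$ and $t\in\Sub T\setminus\vs$, so one-to-one localization of $D$ by $T$ supplies $t_1,\dots,t_n\in\Sub T$ with $\{t_1\sigma',\dots,t_n\sigma'\}\subseteq\RHSset{D}{\beta(i)-1}$ and $t_1,\dots,t_n\to t$ an instance, $t_m=L_m\rho$ and $t=R\rho$, of a standard deduction rule $L_1,\dots,L_n\to R$. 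Since $\vars R\subseteq\bigcup_m\vars{L_m}$, the variable $x\in\vars{R\rho}=\vars t$ occurs in $y\rho$ for some $y\in\vars R$, hence $x\in\vars{L_{m_0}\rho}=\vars{t_{m_0}}$ for some $m_0$; and since $t_{m_0}\sigma'\in\RHSset{D}{\beta(i)-1}$, it is the right-hand side of some $D[j']$ with $j'<\beta(i)$, which is again an event $\to t_{m_0}$ or $\hsend{t_{m_0}}$ recorded by $\vec T$ at an index $i'<i$, with $x\in\vars{t_{m_0}}$.

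The main obstacle is the reception case: one must be careful that the derivation of $t'\sigma'$ demanded by the constraint $\rcv{}{t'}$ really occurs strictly before the reception $\hsend{t\sigma'}$ in $D$. That is the inequality $j_0<\beta(i)$, and it rests on $D$ being a genuine proof of $\sigma\models\css$ (via \Cref{:model|criteria}), on the strict monotonicity of the compliance map, and on the interplay of $\Next{D}{\cdot}$ with send positions; the $\to t$ case, by contrast, is an almost immediate consequence of one-to-one localization and the variable condition $\vars R\subseteq\bigcup_m\vars{L_m}$ on deduction rules. What remains is the bookkeeping of the strong induction and the reading — essentially forced by the statement, since otherwise the conclusion would be self-referential at the first milestone mentioning $x$ — of a bare milestone $\to x$ as \emph{introducing} the variable $x$, so that the descent bottoms out exactly at the earliest milestone of that form.
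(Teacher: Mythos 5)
Your proof is correct, but it takes a genuinely different route from the paper's. The paper disposes of the lemma in three lines: since $x\in\vars{\vec T[i]}$, the ground term $x\sigma'$ occurs in $\Sub{D}$, and because $\sigma'$ is injective on $\Sub{T}$ it cannot equal the $\sigma'$-image of any non-variable subterm of $\Out{\css}$; \Cref{lemma:variable|created|by|composition} then yields an earlier standard (composition) deduction of $x\sigma'$ in $D$, which by maximality of the milestone-sequence length and injectivity must be recorded, and can only be recorded as $\to x$, at an index $<i$. You instead re-derive the fact at the milestone level by strong induction, splitting on the shape of $\vec T[i]$: for receptions you go through compliance, the Origination property and the proof-of-solution condition (via \Cref{:model|criteria}) to find an earlier recorded step containing $x$, and for standard deductions you use one-to-one localization together with the variable condition $\Var{r}\subseteq\bigcup_i\Var{l_i}$ on deduction rules — in effect inlining, at the symbolic level, the same reception-case analysis that sits inside the proof of \Cref{lemma:variable|created|by|composition}. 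What the paper's route buys is brevity and reuse of an already-proved lemma; what yours buys is independence from that lemma's extra hypothesis that no term is deduced twice by a standard rule (which the lemma statement here does not guarantee), plus an explicit acknowledgement of two imprecisions that the paper glosses over: the statement tacitly needs $D$ to be $(\css,\sigma)$-compliant (without that link between $D$'s receptions and $\css$'s sends the claim is false — a maximal, one-to-one localized derivation can receive $t\sigma'$ for $t\in\Sub{T}$ containing a variable never deduced beforehand), and the degenerate case $\vec T[i]=\to x$ must be excluded from the quantification, exactly as in the paper's own use of the lemma in \Cref{:bounded|solution} (``each variable appears first as $\to x$''). Your handling of both points matches the intended reading, and your key inequality $j_0<\beta(i)$ via $\Next{D}{\alpha(\previous{k_0})}\le\beta(i)$ is sound (with the usual convention $\alpha(0)=0$ that the paper also uses implicitly).
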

 
\begin{proof}
	If $x\in\vars{\vec T[i]}$ then there exists corresponding deduction  $D[j]$ that deduces term $\vec T[i]\sigma'$.  
	Then by  \Cref{lemma:variable|created|by|composition} there exists $k<j$ such that $D[j]$ deduces by  a standard rule $x\sigma'$.
	From the injectivity of $\sigma$ follows that $x$ is the only term of $\Sub{T}$ having $\sigma'$ image equal $x\sigma'$.
	Thus, by definition of milestone sequence, there exists $m<i$ such that $\vec T[m] = \to x$.
\end{proof}

\section{Deciding constraint systems}
\label{sec:deciding}
From now we suppose that the \textbf{considered subterm deduction system} contains
 a rule $x_1, x_{2} \rightarrow f(x_1, x_{2})$, where $f$ is a function symbol with arity 
2  that does not occur in any other rule.

%\subsection{Bound on a solution}

\begin{theorem}\label{:bounded|solution}
 Let $\sigma$ such that $\sigma\models \css$, $T$ such that $T\supseteq \scss$ 
and $\sigma'$ an extension of $\sigma$ on $\vars{T}$.
 Let $D$ be a $(T,\sigma')$-maximal derivation one-to-one localized by $T$ for $\sigma'$ in which no term is deduced twice by a standard rule.
 
 Then there exists a solution $\tau$ of $\css$ of size polynomial in \card{\Sub{T}}.
\begin{proof} 
First let us define a \emph{replacement} of a term $q$ by term $p$ in $t$
denoted as $\repl{t}{q \mapsfrom p}$ as follows: $\repl{t}{q \mapsfrom
  p}$ is the term is obtained from $t$ by simultaneous replacing all
occurrences of $q$ in $t$ by $p$.  For a substitution
$\sigma=\condset{x\mapsto t_x}{x\in\dom{\sigma}}$ we define
$\repl{\sigma}{q \mapsfrom p} = \condset{x\mapsto (\repl{t_x}{q
    \mapsfrom p})}{x\in\dom{\sigma}}$

 Let $\vec T$ be a $(T,\sigma')$-milestone sequence for $D$. 

 Let $\vec M=m_1,\dots, m_n$ be the maximal increasing sequence such that for any $i=1,\dots,n$,  $\vec T[m_i] = \hsend{t_{m_i}}$.
 We put also $m_0=0$ and $m_{n+1} = \card{\vec T } + 1$.
 Let $\vec T_i = \vec T[m_i + 1 : m_{i+1}-1]$.

 \smallpar{Goal.}
 We will prove the existence of a ground substitution $\tau'$, set of terms $T'\supseteq T$
%  \cmnt{to add values for $\tau'$ on $T\setminus \vec T$ since $T$ may contain a lot of unnecessary terms... change linear... }
 and a derivation $D'$ which is $(\css,\tau)$-compliant, $(T',\tau)$-maximal (where $\tau=\tau'|_{\vars{\css}}$  is of a linear size on $\Sub{T}$)
 and is one-to-one localized by $T'$ with $\tau'$
 such that its $(T,\tau')$-milestone sequence coincides with $\vec T$.

 If it is proved, by \Cref{:model|criteria} we can show that $\tau \models \css$.
%  having $(T',\tau')$-milestone sequence $\vec T'$ equal to $\vec \bar X, \vec T$, where $\vec \bar X = \to \bar x_1, \to \bar x_2, \dots, \to \bar x_u$.

 \smallpar{Build $T'$.} 
 Let  $X$ be the set of variables of $\vec T$  whose $\sigma'$-instance are not derivable from the empty knowledge.
 By \Cref{:all|vars|in|milestone} 
 each variable $x$ of $\vars{\vec T}$ appears first as $\to x$ in $\vec T$.   % also from origination
 Therefore,  we may put  %using the fact that $D$ is $(T,\sigma')$-maximal, and injective
 $X=\set{x_1,\dots,x_u} =  \vars{\vec T} \setminus \set{x: \to x \in  \vec T_0}$.
 Let for each $x\in X$, let $\bar{x}$ be a new fresh variable (corresponding to $x$) and let $\bar X = \set{\bar x : x\in X}$. 
 Finally, we put $T'=T\cup \bar X$.

 \smallpar{Build $\tau'$.} 
 Let $\tau'$ be a ground substitution defined as follows: 
 \begin{itemize*}
  \item 
 for any $x \in X$, $\bar x\tau'$ is a nonce $n_{x}$ and  $x\tau' = f(t_{m_i}\tau',n_x)$, where $\to x$ appears first in $\vec T_{i}$
 (note that by \Cref{:all|vars|in|milestone} for any $y \in\vars{t_{m_i}}$, $\to y$ appears first time at position before $m_i$ in $\vec T$
 and thus $\tau'$ is correctly defined);
  \item
 for any $y \in \vars{\vec T_0}$, $y\tau' = n_y$; 
  \item
 \footnote{We note that in practice $\vars{T}\setminus \vars{\vec T} = \emptyset$ if we see how $T$ is constructed in \Cref{lemma:localization:one-to-one}.}for any $z\in\vars{T}\setminus \vars{\vec T}$, $z\tau' = a_z$, where $a_z$ is a fresh  constant from $\as \setminus \Nonces$ not appearing in \Sub{T}. % or $\mathcal{D}$ if we extend subterm system definition 
 \end{itemize*}
% W.l.o.g., we suppose that $f$ is binary.

  We can see that $x\tau$ is of polynomial size on $\card{\Sub{\css}}$ for any $x\in\vars{T}$. %\cmnt{change lemma statement to use $\scss$? prove ? what is subst. size ?}
  \smallpar{Show $\tau'$ is injective on $\Sub{T'}$}.
    % for any $p,q\in \Sub{T'}$, $p\tau' = q\tau'$ implies $p=q$.
    Suppose the contrary, let $p,q\in \Sub{T'}$ be a pair with minimal
    size of $p\tau'$ and having $p\tau'=q\tau'$, while $p\neq q$.  If
    neither $p$ nor $q$ is a variable, then this contradicts the
    minimality of $p\tau'$ (we can choose subterms of $p$ and $q$
    satisfying the choice criteria).  If both are variables, then it
    is not possible by the construction of $\tau'$.  W.l.o.g. let
    $p\in \vs$ and $q\notin \vs$. The case where $p\tau'$ is a nonce
    or another constant is impossible; thus $p\tau' = f(t_{i_j},n_x)$
    and $q = f(u, \bar{x})$ (since by construction for every nonce
    $n_x$ there exists only one variable $\bar{x}$ such that
    $\bar{x}\tau'=n_x$ and $n_x\notin \Sub{T'}$).  But again, by
    construction (note that $\bar{x}$ was a fresh variable), the only
    term in $\Sub{T'}$ having $\bar{x}$ as a subterm is $\bar{x}$,
    thus $q\in\vs$: contradiction.\hfill$\Diamond$
 
  \smallpar{Build a replacement to pass from $\tau'$ to $\sigma'$.}
  Let $\delta$ be the replacement $\delta = |_{\condset{x\tau'
      \mapsfrom x\sigma'}{x\in\vars{T}}}$.  Then $\tau'\delta =
  \sigma'$ on $\vars{T}$. Moreover, from the property we have just
  proven follows that for any $t\in\Sub{T'}$, we have $(t\tau')\delta
  =t (\tau' \delta)$ and for $t\in\Sub{T}$, we have $(t\tau')\delta =t
  \sigma'$.  Note also that $(x\tau')\delta = x\tau'$ for any
  $x\in\bar X$.
% % %  Then $\tau'\delta = \sigma'$ on $\vars{T}$ and thus, $\tau'\delta = \sigma'\cup \condset{\bar x \mapsto n_x}{x\in X}$ on $\vars{T'}$. Moreover, from the property we have just proven
% % %  follows that for any $t\in\Sub{T'}$, we have $(t\tau')\delta =t (\tau' \delta) = t(\sigma'\cup \condset{\bar x \mapsto n_x}{x\in X})$.
%  because we proved that $\forall t'\in\Sub{T'}, x\in\vars{T'},  t'\tau'=x\tau' \implies t' = x $.
% % because by construction of $\tau'$, $\Sub{x\tau}$ has no common terms with $\Sub{T}$.

 \smallpar{Build  $(\css,\tau)$-compliant derivation $D'$ localized by $T'$ with $\tau'$.} %Let us now build a derivation $D'$.
 Let $D'_0 = \hnoncereceive {n_{x_1}}, \dots, \hnoncereceive {n_{x_u}}$.
 Let $D'_1$ be a sequence of rules of length $\card{\vec T}$ such that for any $i \leq \card{\vec T}$:
 \begin{itemize*}
	 \item if $\vec T[i] = \hsend{t}$ then $D'_1[i] = \hsend{t\tau'}$;
	 \item if $\vec T[i] = \to x$ and $x\in X$ then $D'_1[i] = n_x, t_{m_j}\tau' \to x\tau'$, where $\to x$ appears first in $\vec T_{j}$; %f(t_{m_j}\tau', n_x)
	 \item if $\vec T[i] = \to y$ and $y\in \vars{\vec T_{0}}$ then $D'_1[i] = \hnoncereceive {y\tau'}$; %n_y
	 \item if $\vec T[i] = \to {t}$ and $t\notin \vs$ then 
		 since $D$ is one-to-one localized by $T$, there exists $t_1,\dots, t_k$ such that 
		 $\hsend{t_j} \in \vec T[1:i-1]$ or ${\to} t_j \in \vec T[1:i-1]$ for $j=1,\dots,k$ and 
% 		 $\dfrac{\hsend{}}{\to} t_j \in \vec T[1:i-1]$ for $j=1,\dots,k$ and 
		 $t_1,\dots, t_k \to t$ is a deduction rule. Thus, we put $D'_1[i] = t_1\tau',\dots, t_k\tau' \to t\tau'$.
 \end{itemize*}

 We define $D'= D'_0, D'_1$. Note that $\RHSset{D'_0}{} = \bar X\tau'$
 and for any $i$, $\RHSset{D'_1}{i} = \vec T[1:i]\tau'$. Thus, by the
 construction $D'$ is a derivation which is $(\css,\tau)$-compliant
 and localized by $T'$ for $\tau'$. Moreover, it is one-to-one
 localized since $\tau'$ is injective on $\Sub{T'}$.

% no need of milestones ? 

 We have by construction of $D'$ that its $(T', \tau')$-milestone sequence  is $\vec T' = \to \bar x_1,\dots,\to\bar x_u,\vec T$.
Moreover, $\card{D'}=\card{\vec T'}$.

% % From this and the previous paragraph we have  that 
% % $\RHSset{D'}{×}{×}
%  Let $\vec M=m_1,\dots, m_n$ be the maximal increasing sequence such that for any $i=1,\dots,n$,  $\vec T[m_i] = \hsend{t_{m_i}}$.
%  We put also $m_0=0$ and $m_{n+1} = \card{\vec T } + 1$.
%  Let $\vec T_i = \vec T[m_i + 1 : m_{i+1}-1]$.

 \smallpar{Show that $D'$ is $(T',\tau')$-maximal.} That is,
 for any $t\in\Sub{T'}$ if 
$t\tau'\in\Der{\RHSset{D'}{i}}$ then 
$t\tau'\in\RHSset{D'}{\Next{D'}{i}-1}$.
 
The case $t\in\bar X$ is trivial, since $\bar X\tau'$ is deduced at the very beginning of $D'$.
 
\begin{removable}
Note also that by construction and since $D$ is maximal we have that  \cmnt{do we need it?}
  for any $j$ any term 
 $t\in T$ deducible from ${t_{i_1}\sigma', \dots, t_{i_j}\sigma'}$
 is also deducible from $t\tau' \in \Der{t_{i_1}\tau', \dots, t_{i_j}\tau'}$
 and deduced in $D'$ before $(j+1)$-th send index.  
\end{removable}

 Suppose that there an exists index $j$ and term $t\in \Sub{T}$ such that $t\tau' \in \Der{t_{m_1}\tau', \dots, t_{m_j}\tau'}$ 
 but $t\tau' \notin \RHSset{D'}{u+m_{j+1}-1}$, i.e. $t\tau'$ is not deduced before the next to $j$ non-standard rule in $D'$.
 In this case,  $t\sigma'\notin\Der{t_{m_1}\sigma', \dots, t_{m_j}\sigma'}$, otherwise by maximality $t\sigma'$
 would be deduced before $(j+1)$-th non-standard rule of $D$ and by construction, $t\tau'$ would also appear 
 in $D'$ before $(j+1)$-th nonstandard rule of $D'$.

  Let $j$ be such a minimal index. % such that there exists $t\in\Sub{T}$ satisfying just stated properties. 
  Note that $\vars{t} \subseteq \vars{\vec T}$, otherwise by construction $t\tau'$ would contain some fresh constants from $\as\setminus\Nonces$ and thus would not be derivable from $t_{i_1}\tau', \dots, t_{i_j}\tau'$. % since contains constants not from ...
  Let $m'$ (resp. $m$) be the maximal index such that $D'[1:m']$ (resp. $D[1:m]$) contains exactly $j$ non-standard rules.
  Thus, $t\tau'\in \Der{\RHSset{D'}{m'}}$ and $t\sigma' \notin \Der{\RHSset{D}{m}}$. 
  Note that $t\tau'\notin \RHSset{D'}{m'}$ (otherwise it would imply $t\sigma'\in \RHSset{D}{m}$). 
  Let $E'$ be a minimal sequence of standard rules such that
  $D'[1:m'],E'$ is a derivation ending with a standard deduction of $t\tau'$.
  % or alternatively: let $t$ be such that minimal proof $E'$ is minimal amongst all $t$...
  W.l.o.g.,  we suppose that $E'[1:\card{E'}-1]$ does not deduce terms from $\Sub{T}\tau'$ 
  (otherwise, if $t'\tau'$ is deduced in $E'[1:\card{E'}-1]$ with $t'\in\Sub{T}$ then \textit{(i)} either $t'\sigma'\in\Der{\RHSset{D}{m}}$ and by maximality of $D$ 
   $t'\sigma'\in{\RHSset{D}{m}}$ which contradicts the minimality of $E'$ \textit{(ii)} or $t'\sigma'\notin\Der{\RHSset{D}{m}}$ which implies
   $t'\sigma'\notin{\RHSset{D}{m}}$; thus by construction $t'\tau' \notin{\RHSset{D'}{m'}}$ and we could chose $t'$ instead of $t$).

  Let $\css'$ be a constraint system obtained from $\css$ by removing all constraints after $j$-th $\snd{}{}$-constraint and removing all $\forbid{}{}$-constraints.
  By construction, $D'[1:m'], E'$ is a proof of $\tau \models \css'$ and thus we can apply \Cref{lemma:variable|created|by|composition},
  i.e. all rules of $E'[1:\card{E'}-1]$  are compositions.
%   i.e. all rules of $E'$, except possibly the last, are compositions.

  Suppose that $t$ is a variable. Note that $t\tau'$  is not a nonce, otherwise by definition of $\tau'$, $t\in \vec T_0$ and thus $t\sigma'\in\RHSset{D}{m}$. 
% Contradiction.
  Therefore, $t\tau'=f(t_{m_k}\tau', n_t)$,   where $\to t$ first appears in $\vec T_k$.
  Since $t$ is a variable, the last rule of $E'$ is also a composition, more precisely $t_{m_k}\tau', n_t \to f(t_{m_k}\tau', n_t)$. 
  If $k\leq j$, by construction of $\tau'$, $t\sigma'$ must be in $\RHSset{D}{m}$.
  Thus,  $k> j$.
  Since $D'[1:m'], E'$ is a derivation, either $t_{m_k}\tau' \in \RHSset{E'}{\card{E'}-1}$
  or $t_{m_k}\tau' \in \RHSset{D'}{m'}$. The former contradicts the choice of $E'$. 
  The latter case implies $t_{m_k}\sigma' \in \RHSset{D}{m} \subseteq \Der{t_{m_1}\sigma', \dots, t_{m_j}\sigma'}$ and thus, 
  as $j<k$ we have that $\Der{t_{m_1}\sigma', \dots, t_{m_k}\sigma'}= \Der{t_{m_1}\sigma', \dots, t_{m_{k-1}}\sigma'}$.
  Thus, $\vec T_k$ must be empty, otherwise it contradicts the maximality of $D$ and that no term is deduced twice by a standard rule in $D$. 
  This contradicts that $\to t$ appears first in $T_k$.

  Thus, $t\notin \vs$.

  Let us build 
  a sequence of rules $E$ such that $E[i]=E'[i]\delta$  % deduces term $s$, then $E[i]$ deduces term $s\delta$:
  and show that $D[1:m],D'_0,E$ is a proof of
  $t\sigma'\in\Der{t_{m_1}\sigma', \dots, t_{m_j}\sigma'}$.  
  % \fix{T}{$D'_0$ is not needed, but to prove it may take much more space than to simply append it}
  
  Let us show that $E'[i]\delta$ is a rule.
  \begin{itemize*}
   \item If $E'[i]=\hnoncereceive{o}$ is a nonce generation, then $o\notin\img{\tau'}$ %$\forall x \in \vars{T'}$, $x\tau'\neq o$ 
     due to the minimality of $E'$ and since all variables of $T'$ that are mapped to nonces by $\tau'$ are deduced in $D'$ before the first  
     non-standard rule. Thus $o\delta = o$ and we have $E[i]=\hnoncereceive{o}$
   \item If $E'[i]$ is another composition, then $E'[i] = t'_1,\dots, t'_v \to h(t'_1,\dots,t'_v)$.
         Since $t\notin \vs$ 
         and $E'[1:\card{E'}-1]$ does not deduce terms from $\Sub{T}\tau'$ 
         we have $h(t'_1,\dots,t'_v) \neq x\tau'$
         for any $x\in\vars{T'}$. Thus, $h(t'_1,\dots,t'_v)\delta = h(t'_1\delta,\dots,t'_v\delta)$ and we have 
         $t'_1\delta,\dots,t'_v\delta \to h(t'_1\delta,\dots,t'_v\delta)$ is a composition rule.
   \item If $E'[i]$ is a decomposition, then %(we note that $i=\card{E'}$)
	 since no decomposition rule contains $f$, the value of $x\tau'$ (which is a fresh nonce or has $f$ as a root symbol)
	may be replaced with any other term and we still obtain an instance of the same decomposition rule,
	i.e. $E'[i]\delta$ is an instance of a decomposition rule.
	% More precisely: $E'[i]=(L\to R)\phi$, $E[i] = E[i]\delta$ Then, since. .. . $(L\to R)\phi \delta = (L \to R)(\phi\delta)$.
       
  \end{itemize*}
  
  As noted above, since $\forall r\in\Sub{T}, (r\tau')\delta = r\sigma'$ and $D'_0\delta=  D'_0$
  by construction we have 
  $\RHSset{D'}{m'}\delta \subseteq   \RHSset{D}{m} \cup \set{x\tau': x\in \bar X}$. 
  Thus, $D[1:m],D'_0, E$ is a derivation deducing  $t\tau'\delta = t\sigma'$, i.e.
  $t\sigma'\in\Der{t_{i_1}\sigma', \dots, t_{i_j}\sigma'}$. Contradiction.

  Therefore, $D'$ is $(T',\tau')$-maximal.

% Since $\bar X\cap \Sub{\css} = \emptyset$, $\vec \bar X, \vec T$ satisfies \Cref{:milestone|as|solution}  conditions, we obtain   $\tau$ satisfies $\css$. 

\smallpar{Conclusion.}
Since $\Sub{\css}\subseteq T'$, and $\tau'$ is injective on $\Sub{T'}$,
we have that 
by construction of $D'$, for any term $t\in\Sub{\css}$, $t\sigma'$ is deduced before $j$-th non-standard rule of $D$ (resp. deduced in $D$) if and only if 
  $t\tau'$ is deduced before $j$-th non-standard rule of $D'$ (resp. deduced in $D'$).
Therefore, 
since $\sigma\models\css$ and $D$ is $(\css,\sigma)$-compliant and $(\scss,\sigma)$-maximal 
and 
since $D'$ is $(\css,\tau)$-compliant and $(\scss,\tau)$-maximal 
we may use 
twice \Cref{:model|criteria} and obtain that  $\tau$ satisfies $\css$. 
\end{proof}

\end{theorem}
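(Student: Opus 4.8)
The plan is to exploit the dedicated composition rule $x_1,x_2\to f(x_1,x_2)$, with $f$ barred from every decomposition rule, in order to manufacture from the (possibly enormous) solution $\sigma$ a small solution $\tau$ that recycles the combinatorial skeleton of $D$. First I would take the $(T,\sigma')$-milestone sequence $\vec T$ of $D$ and cut it along its non-standard steps into blocks $\vec T_0,\vec T_1,\dots$, with $\vec T_0$ occurring before the first emission and $\vec T_i$ between the $i$-th and $(i{+}1)$-th emissions. By \Cref{:all|vars|in|milestone} every variable of $\vec T$ appears first under a step of the form $\to x$, so I can totally order those variables by first appearance and define $\tau'$ by recursion along this order: a variable whose $\sigma'$-instance is derivable from $\emptyset$ is harmless; for the remaining variables $x$, first appearing in block $\vec T_i$, I set $x\tau' = f(t_{m_i}\tau',n_x)$ where $t_{m_i}$ is the message emitted just before $\vec T_i$ and $n_x$ is a fresh nonce; variables first appearing in $\vec T_0$ go to fresh nonces, and variables of $T$ not occurring in $\vec T$ to fresh constants outside $\Nonces$. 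A routine induction on the block index (counting DAG size) shows each $x\tau'$ has size polynomial in $\card{\Sub{T}}$, and I set $\tau = \tau'|_{\vars{\css}}$.

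The point of this $\tau'$ is to secure three things at once. First, injectivity on $\Sub{T'}$, where $T' = T\cup\{\bar x : x\in X\}$ and $X$ is the set of these "non-trivial" variables: a minimal-counterexample argument works, using that each fresh nonce $n_x$ pins down a unique variable and that the only subterm of $\Sub{T'}$ containing $\bar x$ is $\bar x$ itself. Second, there is a replacement $\delta$ carrying $x\tau'$ to $x\sigma'$ for $x\in\vars{T}$ and fixing the $\bar x$'s; by injectivity it satisfies $(t\tau')\delta = t\sigma'$ for all $t\in\Sub{T}$, which is the bridge back to $\sigma'$. Third, $x\tau' = f(t_{m_i}\tau',n_x)$ becomes derivable exactly when $t_{m_i}$ is emitted, i.e. precisely when $D$ deduces $x\sigma'$; so I can build a $(\css,\tau)$-compliant derivation $D'$ with milestone sequence $\to\bar x_1,\dots,\to\bar x_u,\vec T$ — prepend the nonce creations for the $n_x$, then walk through $\vec T$ translating each $\to x$ ($x\in X$) into the composition $n_x, t_{m_i}\tau'\to x\tau'$, each $\to y$ with $y$ first in $\vec T_0$ into a nonce creation, each $\hsend{t}$ into $\hsend{t\tau'}$, and each $\to t$ with $t\notin\vs$ into the rule instance on terms of $\Sub{T'}$ furnished by the one-to-one localization of $D$. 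This $D'$ is $(\css,\tau)$-compliant and one-to-one localized by $T'$ for $\tau'$.

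The real work is showing $D'$ is $(T',\tau')$-maximal, and I expect this to be the main obstacle. I would argue by contradiction: choose a minimal block index $j$ and a term $t\in\Sub{T}$ with $t\tau'$ derivable from the first $j$ emitted messages but not appearing in $D'$ before the $(j{+}1)$-th emission; maximality of $D$ forces the mirror statement $t\sigma'\notin\Der{\cdot}$ on the $\sigma'$ side. Take a minimal sequence $E'$ of standard rules extending the relevant prefix of $D'$ to a standard deduction of $t\tau'$; since that prefix is a proof of $\tau$ against a suitable truncated, positive constraint system, \Cref{lemma:variable|created|by|composition} makes $E'$ consist only of compositions, and after pruning $E'$ re-deduces no $\Sub{T}\tau'$-term before its last step. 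If $t$ is a variable, a quick contradiction appears: $t\tau'$ is then an $f$-term $f(t_{m_k}\tau',n_t)$, so $t_{m_k}\tau'$ is already available, hence via $\delta$ so is $t_{m_k}\sigma'$, hence $\vec T_k$ is empty by maximality of $D$, contradicting that $\to t$ first appears in $\vec T_k$. If $t\notin\vs$, I transport $E'$ across $\delta$ step by step: a nonce creation stays one (its nonce is not in $\img{\tau'}$ by minimality of $E'$); a composition $t'_1,\dots,t'_v\to h(\cdot)$ stays one since $h(\cdot)$ is no $x\tau'$; and a decomposition stays one precisely because $f$ occurs in no decomposition rule, so replacing the $f$-rooted or nonce values $x\tau'$ by anything still yields an instance of the same rule. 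Then $D[1{:}m],D'_0,E'\delta$ witnesses $t\sigma'\in\Der{\cdot}$, contradicting maximality of $D$.

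To finish, I would note that since $\Sub{\css}\subseteq T'$ and $\tau'$ is injective on $\Sub{T'}$, the construction makes the status "deduced before the $j$-th emission / deduced at all" of $t\sigma'$ in $D$ agree with that of $t\tau'$ in $D'$ for every $t\in\Sub{\css}$; applying \Cref{:model|criteria} once to $(D,\sigma)$ (valid because $\sigma\models\css$) and once to $(D',\tau)$ transfers satisfaction of each constraint and yields $\tau\models\css$, with $\tau$ of polynomial size in $\card{\Sub{T}}$ by the bound on $\tau'$. The delicate point is the maximality argument for $D'$, and within it the decomposition-transport step that crucially relies on the isolation of $f$; everything else is bookkeeping around the milestone sequence and the replacement $\delta$.
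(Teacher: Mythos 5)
Your proposal is correct and follows essentially the same route as the paper's own proof: the same construction of $\tau'$ via the isolated pairing symbol $f$ applied to the last emitted message and a fresh nonce, the same injectivity argument on $\Sub{T'}$, the same replacement $\delta$ bridging back to $\sigma'$, the same construction of $D'$ along the milestone sequence, and the same maximality-by-contradiction argument (minimal block index, minimal composition-only suffix $E'$, transport through $\delta$ using that $f$ occurs in no decomposition rule), concluded by two applications of Lemma~\ref{:model|criteria}. No substantive differences to report.
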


\begin{corollary}\label{:iff}
 Let $\css$ be a constraint system.
 $\css$ is satisfiable, if and only if there exists a solution $\sigma'$ of $\css$ with 
polynomial size w.r.t.  $\card{\Sub{\css}}$.
\begin{proof}
 $(\Leftarrow)$ is trivial, since $\sigma' \models \css$. Consider
 $(\Rightarrow)$. Let $\sigma\models \css$.
%  By \Cref{lemma:existence:maximal:localization} 
 By \Cref{lemma:localization:one-to-one} there exists a set of terms $T$,
 a substitution $\theta$ both with the size linear in $\card{\scss}$
 and an
 extension $\gamma$ of $\sigma$ and $(T,\gamma)$-maximal $(\css\theta,\sigma)$-compliant derivation $D$ one-to-one localized by $T$ for $\gamma$.
 We also have  $\gamma = \theta \gamma$ (which implies $\sigma=\theta\sigma$). Thus $\sigma$ satisfies  $\css\theta$.
 
 From the same lemma we have $\Sub{\css\theta}\subseteq T$. 
 By \Cref{:bounded|solution} there exists a substitution $\tau$ of size polynomial in $\card{\Sub{T}}$ (and consequently, polynomial in $\card{\Sub{\css}}$)  such that $\tau\models \css\theta$.
 From this we have $\theta\tau \models \css$. Moreover, since both $\theta$ and $\tau$ are of polynomial size on $\card{\Sub{\css}}$,
 $\sigma'=\theta\tau$ is also of polynomial size on $\card{\Sub{\css}}$ and $\sigma' \models \css$.
\end{proof}
\end{corollary}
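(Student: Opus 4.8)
The plan is to prove the nontrivial direction $(\Rightarrow)$ by chaining the two main results established above, \Cref{lemma:localization:one-to-one} and \Cref{:bounded|solution}; the direction $(\Leftarrow)$ is immediate, since a polynomial-size solution is in particular a solution, so $\css$ is satisfiable.

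First I would take an arbitrary solution $\sigma\models\css$ and feed it to \Cref{lemma:localization:one-to-one}. This hands back a term set $T$, a substitution $\tau_0$ of domain $\Var{T}\setminus\Var{\css}$, a substitution $\theta$, and a $(\css\theta,\sigma)$-compliant derivation $D$ that is $(T,\sigma\cup\tau_0)$-maximal and one-to-one localized by $T$ for $\sigma\cup\tau_0$, with $\sigma\cup\tau_0=\theta(\sigma\cup\tau_0)$, $\Sub{\css\theta}\subseteq T$, and both $T$ and $\theta$ of size linear in $\card{\scss}$. Writing $\sigma'=\sigma\cup\tau_0$, the idempotence relation $\sigma'=\theta\sigma'$ restricts on $\Var{\css}$ to $x\sigma=x\theta\sigma$, hence $\css\sigma=\css\theta\sigma$, so $\sigma\models\css\theta$; moreover $\sigma'$ is an extension of $\sigma$ on $\Var{T}$. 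I would also record that $D$ is obtained, through \Cref{lemma:localization}, from \Cref{lemma:existence:maximal:localization}, so no term is deduced twice in it by a standard rule.

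Next I would apply \Cref{:bounded|solution} to the constraint system $\css\theta$, with this $T$ (which satisfies $T\supseteq\Sub{\css\theta}$), this $\sigma'$, and this $D$ — every hypothesis of the theorem being exactly what the previous paragraph recorded. The theorem then produces a solution $\tau$ of $\css\theta$ of size polynomial in $\card{\Sub{T}}$, hence, since $T$ is linear in $\card{\scss}$, polynomial in $\card{\scss}$. From $\tau\models\css\theta$ — every received term of $\css\theta\tau$ derivable, every forbidden term not derivable — I conclude $\theta\tau\models\css$ (restricting to $\Var{\css}$ if needed). Finally I would bound the size of $\theta\tau$: composing substitutions inflates each image of $\theta$ by a factor at most the maximal image size of $\tau$, and by the standing convention $\theta$ introduces no fresh variables, so $\theta\tau$ has size polynomial in $\card{\scss}$, which is the required polynomial-size solution.

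Each step is short, and the real work was already done in the preceding lemmas and theorem; the only care needed is bookkeeping. The main potential pitfall is verifying that the derivation returned by \Cref{lemma:localization:one-to-one} meets all the hypotheses of \Cref{:bounded|solution} — in particular the ``no term deduced twice by a standard rule'' clause, which must be tracked back through \Cref{lemma:localization} to \Cref{lemma:existence:maximal:localization} — together with the correct passage between solutions of $\css$ and of $\css\theta$ via the identity $\sigma'=\theta\sigma'$. The size estimate for the composite substitution $\theta\tau$ is routine once one notes that $\theta$ does not enlarge the variable set.
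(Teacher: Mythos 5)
Your proposal is correct and follows essentially the same route as the paper: apply \Cref{lemma:localization:one-to-one} to pass from $\sigma\models\css$ to $\sigma\models\css\theta$ via $\sigma=\theta\sigma$, then invoke \Cref{:bounded|solution} on $\css\theta$ to get a polynomial-size $\tau$, and return $\sigma'=\theta\tau$. Your explicit tracking of the ``no term deduced twice by a standard rule'' hypothesis back through \Cref{lemma:localization} is a point the paper's proof leaves implicit, but it does not change the argument.
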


From the previous result we can directly derive an NP  decision 
procedure for constraint systems satisfiability:  guess a substitution 
of polynomial size in $\card{\Sub{\css}}$ and check whether it satisfies $\css$ 
in polynomial time  (see e.g. \cite{AbadiCortier}). 

% \begin{algorithm}[H]
%   \caption{Solving a constraint system}
%   \label{alg:solving:simpler}
%   \SetKw{Guess}{Guess}
%   \SetKw{KwCheck}{Check}
%   \SetKw{Put}{Put}
%   \SetKw{Choose}{Choose}
%   \KwIn{Constraint system $\css$, subterm deduction rule set $\mathcal{D}$
% }
%   \KwOut{$\top$ if $\css$ is satisfiable, $\bot$ otherwise}
%   \BlankLine
%   \Guess A substitution $\sigma$ of polynomial size on $\scss$;\\
%   \lIf{ $\sigma$ satisfies $\css$ {\Return{ $\top$} } \lElse{\Return{ $\bot$ }}}
% \end{algorithm}

%The correctness and completeness follows from \Cref{:iff}.
%The verification that $\sigma$ satisfies $\css$  can be done in  polynomial time
%as a consequence of results in  \cite{AbadiCortier}.

\section{Conclusion}
\label{sec:conclusion}
We have obtained the first decision procedure for deducibility constraints with negation  and
we have applied it to the synthesis of mediators subject to non-disclosure
policies. It has been implemented as an extension of
CL-AtSe~\cite{Turuani06} for the Dolev-Yao deduction system. On the
Loan Origination case study, the prototype generates directly the
expected orchestration. Without negative constraints undesired
solutions in which the mediator impersonates the clerks were found.
More details, including problem specifications, can be found at \url{http://cassis.loria.fr/Cl-Atse}.
As in~\cite{AbadiCortier,Baudet} our definition of subterm deduction systems can be
extended to allow ground terms in right-hand sides of decomposition
rules even when they are not subterms of left-hand sides and 
the decidability result
remains valid with minor adaptation of the proof.  A more challenging
extension would be to consider general constraints (as in~\cite{ACIarxiv}) with negation.

\bibliographystyle{plain}
\bibliography{neg,aci,avantssar,4these,rfc}

\ifdraft\end{document}\fi
\end{document}

\appendix
%%%%%%%%%%%%%%%%%%%%%%%%%%%%%%%%%%%%%%%%%%%%%%%%%%%%%%%%%%%%%%%%%%%%%%%%%%%%%%%%%%%%%%%%%%%%%%%%%

\section{The ASLan++ specification}

\begin{lstlisting}[breaklines]
specification NegativeConstraints
channel_model CCM

entity Environment {

  symbols 
	nonpublic noninvertible  g(agent)    : symmetric_key ;
	noninvertible h(message)  : message ;
	loan , request,amount100,amount1000: text ;	
	client , alice, bob ,charlie,pep,mediator : agent ;

  entity Client(Actor: agent,
	P, M: agent ,
	Amount: text  ) {
    symbols
	 Ephemeral_key : symmetric_key ;
	 Resp_A,Resp_B,
	 End_execution_client: text ;
         A,B : agent ;

    body {
      Actor -> M : {g(Actor).loan.P}_inv(pk(Actor)) ;
      M -> Actor : ?A.?B ;
                  Ephemeral_key := fresh() ;
      Actor -> M : {Amount.Actor.Ephemeral_key}_pk(A).{Amount.Actor.Ephemeral_key}_pk(B) ;
      M -> Actor : {h(A.Amount.Actor.?Resp_A)}_inv(pk(A)).{|?Resp_A|}_Ephemeral_key. 
		   {h(B.Amount.Actor.?Resp_B)}_inv(pk(B)).{|?Resp_B|}_Ephemeral_key ;
      secrecy_End_of_execution:(End_execution_client) := fresh() ;
      Actor -> P: {Amount.Actor.A.Resp_A.B.Resp_B}_pk(P). 
		   {h(A.Amount.Actor.Resp_A)}_inv(pk(A)). 
	           {h(B.Amount.Actor.Resp_B)}_inv(pk(B)).End_execution_client ;
    }

  goals
    secrecy_End_of_execution:(_) {Actor,M};

  }

  entity Clerk(Actor:agent) {
    symbols
	M,Client: agent ;
	Ephemeral_key : symmetric_key;
	Amount,Response : text ;

    body {
      ? -> Actor : request.?M ;
      Actor -> M : g(Actor).pk(Actor) ;
      M -> Actor : {?Amount.?Client.?Ephemeral_key}_pk(Actor) ;
                  Response := fresh() ;
      Actor -> M : {h(Actor.Amount.Client.Response)}_inv(pk(Actor)).{|Response|}_Ephemeral_key ;
    }
  }

  entity Mediator(Actor:agent,M:agent) {
    body {
      Actor -> M: {|g(charlie)|}_g(bob). {|g(bob)|}_g(charlie).{|g(client)|}_g(bob).{|g(bob)|}_g(client) ;
    }
  }

  body {
     new Clerk(alice) ;
     new Clerk(bob) ;
     new Clerk(charlie) ;
     new Client(client,pep,mediator,amount100) ;
     new Mediator(i,alice) ;
  }

  goals
    secrecy_End_of_execution:(_) {};
}
\end{lstlisting}

\section{The ASLan specification}

\begin{lstlisting}[breaklines]
% @specification(NegativeConstraints)
% @channel_model(CCM)
% @connector_name(ASLan++ Connector)
% @connector_version(0.6.8)
% @connector_options(-opt LUMP -hc ALL -gas)

section signature:

	text > slabel
	ak : agent -> public_key
	child : nat * nat -> fact
	ck : agent -> public_key
	defaultPseudonym : agent * nat -> public_key
	descendant : nat * nat -> fact
	dishonest : agent -> fact
	g : agent -> symmetric_key
	h : message -> message
	hash : message -> message
	pk : agent -> public_key
	secrecy_End_of_execution_set : nat -> set(agent)
	secret_End_of_execution_set : nat -> set(agent)
	state_Clerk : agent * nat * nat * agent * agent * symmetric_key * text * text -> fact
	state_Client : agent * nat * nat * agent * agent * text * symmetric_key * text * text * text * agent * agent -> fact
	state_Environment : agent * nat * nat -> fact
	state_Mediator : agent * nat * nat * agent -> fact
	succ : nat -> nat

section types:

	A : agent
	% @original_name(name=A; match=true)
	A_1 : agent
	Actor : agent
	Ak_arg_1 : agent
	Amount : text
	B : agent
	% @original_name(name=B; match=true)
	B_1 : agent
	Ck_arg_1 : agent
	Client : agent
	% @original_name(name=Client; match=true)
	Client_1 : agent
	Descendant_arg_1 : nat
	Descendant_arg_2 : nat
	Descendant_arg_3 : nat
	Dummy : agent
	% @original_name(name=Actor)
	E_C_Actor : agent
	% @original_name(name=Actor)
	E_C_Actor_ : agent
	% @original_name(name=Amount)
	E_C_Amount : text
	% @original_name(name=Amount; match=true)
	E_C_Amount_1 : text
	% @original_name(name=Ephemeral_key)
	E_C_Ephemeral_key : symmetric_key
	% @original_name(name=Ephemeral_key; match=true)
	E_C_Ephemeral_key_1 : symmetric_key
	% @original_name(name=IID)
	E_C_IID : nat
	% @original_name(name=IID)
	E_C_IID_ : nat
	% @original_name(name=M)
	E_C_M : agent
	% @original_name(name=M; match=true)
	E_C_M_1 : agent
	% @original_name(name=SL)
	E_C_SL : nat
	% @original_name(name=SL)
	E_C_SL_ : nat
	% @original_name(name=Actor)
	E_M_Actor : agent
	% @original_name(name=IID)
	E_M_IID : nat
	% @original_name(name=M)
	E_M_M : agent
	% @original_name(name=SL)
	E_M_SL : nat
	% @original_name(name=Knowers)
	E_sEoe_Knowers : set(agent)
	% @original_name(name=Msg)
	E_sEoe_Msg : message
	End_execution_client : text
	% @original_name(name=End_execution_client; fresh=true)
	End_execution_client_1 : text
	Ephemeral_key : symmetric_key
	% @original_name(name=Ephemeral_key; fresh=true)
	Ephemeral_key_1 : symmetric_key
	H_arg_1 : message
	Hash_arg_1 : message
	IID : nat
	% @original_name(name=IID; fresh=true)
	IID_1 : nat
	% @original_name(name=IID; fresh=true)
	IID_2 : nat
	% @original_name(name=IID; fresh=true)
	IID_3 : nat
	% @original_name(name=IID; fresh=true)
	IID_4 : nat
	% @original_name(name=IID; fresh=true)
	IID_5 : nat
	Knowers : set(agent)
	M : agent
	Msg : message
	P : agent
	Pk_arg_1 : agent
	Resp_A : text
	% @original_name(name=Resp_A; match=true)
	Resp_A_1 : text
	% @original_name(name=Resp_A; match=true)
	Resp_A_2 : text
	Resp_B : text
	% @original_name(name=Resp_B; match=true)
	Resp_B_1 : text
	% @original_name(name=Resp_B; match=true)
	Resp_B_2 : text
	Response : text
	% @original_name(name=Response; fresh=true)
	Response_1 : text
	SL : nat
	Succ_arg_1 : nat
	alice : agent
	amount100 : text
	amount1000 : text
	atag : slabel
	bob : agent
	charlie : agent
	client : agent
	ctag : slabel
	dummy_agent_1 : agent
	dummy_agent_2 : agent
	dummy_agent_3 : agent
	dummy_agent_4 : agent
	dummy_agent_5 : agent
	dummy_agent_6 : agent
	dummy_agent_7 : agent
	dummy_agent_8 : agent
	dummy_nat : nat
	dummy_symmetric_key_1 : symmetric_key
	dummy_symmetric_key_2 : symmetric_key
	dummy_symmetric_key_3 : symmetric_key
	dummy_symmetric_key_4 : symmetric_key
	dummy_text_1 : text
	dummy_text_2 : text
	dummy_text_3 : text
	dummy_text_4 : text
	dummy_text_5 : text
	dummy_text_6 : text
	dummy_text_7 : text
	dummy_text_8 : text
	dummy_text_9 : text
	% @original_name(name=request)
	e_request : text
	false : fact
	loan : text
	mediator : agent
	pep : agent
	root : agent
	secrecy_End_of_execution : protocol_id
	secret_End_of_execution : protocol_id
	stag : slabel
	true : fact

section inits:

% @new_instance(new_entity=Environment; Actor=root; IID=0; SL=1)
initial_state init :=
	child(dummy_nat, 0).
	dishonest(i).
	iknows(0).
	iknows(alice).
	iknows(amount100).
	iknows(amount1000).
	iknows(atag).
	iknows(bob).
	iknows(charlie).
	iknows(client).
	iknows(ctag).
	iknows(e_request).
	iknows(i).
	iknows(inv(ak(i))).
	iknows(inv(ck(i))).
	iknows(inv(pk(i))).
	iknows(loan).
	iknows(mediator).
	iknows(pep).
	iknows(root).
	iknows(stag).
	iknows(pair(scrypt(g(bob), g(charlie)), pair(scrypt(g(charlie), g(bob)), pair(scrypt(g(bob), g(client)), scrypt(g(client), g(bob)))))).
	state_Environment(root, 0, 1).
	true

section hornClauses:

hc public_ck(Ck_arg_1) :=
	iknows(ck(Ck_arg_1)) :-
		iknows(Ck_arg_1)

hc public_ak(Ak_arg_1) :=
	iknows(ak(Ak_arg_1)) :-
		iknows(Ak_arg_1)

hc public_pk(Pk_arg_1) :=
	iknows(pk(Pk_arg_1)) :-
		iknows(Pk_arg_1)

hc public_hash(Hash_arg_1) :=
	iknows(hash(Hash_arg_1)) :-
		iknows(Hash_arg_1)

hc public_succ(Succ_arg_1) :=
	iknows(succ(Succ_arg_1)) :-
		iknows(Succ_arg_1)

hc inv_succ_1(Succ_arg_1) :=
	iknows(Succ_arg_1) :-
		iknows(succ(Succ_arg_1))

hc descendant_closure(Descendant_arg_1, Descendant_arg_2, Descendant_arg_3) :=
	descendant(Descendant_arg_1, Descendant_arg_3) :-
		descendant(Descendant_arg_1, Descendant_arg_2),
		descendant(Descendant_arg_2, Descendant_arg_3)

hc descendant_direct(Descendant_arg_1, Descendant_arg_2) :=
	descendant(Descendant_arg_1, Descendant_arg_2) :-
		child(Descendant_arg_1, Descendant_arg_2)

hc public_h(H_arg_1) :=
	iknows(h(H_arg_1)) :-
		iknows(H_arg_1)

section rules:

% @new_instance(entity=Environment; iid=IID; line=61; new_entity=Clerk; Actor=alice; IID=IID_1; SL=1; M=dummy_agent_1; Client=dummy_agent_2; Ephemeral_key=dummy_symmetric_key_1; Amount=dummy_text_1; Response=dummy_text_2)
% @new_instance(entity=Environment; iid=IID; line=62; new_entity=Clerk; Actor=bob; IID=IID_2; SL=1; M=dummy_agent_3; Client=dummy_agent_4; Ephemeral_key=dummy_symmetric_key_2; Amount=dummy_text_3; Response=dummy_text_4)
% @new_instance(entity=Environment; iid=IID; line=63; new_entity=Clerk; Actor=charlie; IID=IID_3; SL=1; M=dummy_agent_5; Client=dummy_agent_6; Ephemeral_key=dummy_symmetric_key_3; Amount=dummy_text_5; Response=dummy_text_6)
% @new_instance(entity=Environment; iid=IID; line=64; new_entity=Client; Actor=client; IID=IID_4; SL=1; P=pep; M=mediator; Amount=amount100; Ephemeral_key=dummy_symmetric_key_4; Resp_A=dummy_text_7; Resp_B=dummy_text_8; End_execution_client=dummy_text_9; A=dummy_agent_7; B=dummy_agent_8)
% @new_instance(entity=Environment; iid=IID; line=65; new_entity=Mediator; Actor=i; IID=IID_5; SL=1; M=alice)
% @step_label(entity=Environment; iid=IID; line=65; variable=SL; term=6)
step step_001_Environment__line_61(Actor, IID, IID_1, IID_2, IID_3, IID_4, IID_5) :=
	state_Environment(Actor, IID, 1)
	=[exists IID_1, IID_2, IID_3, IID_4, IID_5]=>
	child(IID, IID_1).
	child(IID, IID_2).
	child(IID, IID_3).
	child(IID, IID_4).
	child(IID, IID_5).
	state_Clerk(alice, IID_1, 1, dummy_agent_1, dummy_agent_2, dummy_symmetric_key_1, dummy_text_1, dummy_text_2).
	state_Clerk(bob, IID_2, 1, dummy_agent_3, dummy_agent_4, dummy_symmetric_key_2, dummy_text_3, dummy_text_4).
	state_Clerk(charlie, IID_3, 1, dummy_agent_5, dummy_agent_6, dummy_symmetric_key_3, dummy_text_5, dummy_text_6).
	state_Client(client, IID_4, 1, pep, mediator, amount100, dummy_symmetric_key_4, dummy_text_7, dummy_text_8, dummy_text_9, dummy_agent_7, dummy_agent_8).
	state_Environment(Actor, IID, 6)

% @communication(entity=Client; iid=E_C_IID; line=22; sender=E_C_Actor; receiver=M; payload=sign(inv(pk(E_C_Actor)), pair(g(E_C_Actor), pair(loan, P))); channel=regularCh; fact=iknows(sign(inv(pk(E_C_Actor)), pair(g(E_C_Actor), pair(loan, P)))); direction=send)
% @step_label(entity=Client; iid=E_C_IID; line=22; variable=SL; term=2)
step step_002_Client__line_22(A, Amount, B, E_C_Actor, E_C_IID, End_execution_client, Ephemeral_key, M, P, Resp_A, Resp_B) :=
	not(dishonest(E_C_Actor)).
	state_Client(E_C_Actor, E_C_IID, 1, P, M, Amount, Ephemeral_key, Resp_A, Resp_B, End_execution_client, A, B)
	=>
	iknows(sign(inv(pk(E_C_Actor)), pair(g(E_C_Actor), pair(loan, P)))).
	state_Client(E_C_Actor, E_C_IID, 2, P, M, Amount, Ephemeral_key, Resp_A, Resp_B, End_execution_client, A, B)

% @communication(entity=Client; iid=E_C_IID; line=23; sender=M; receiver=E_C_Actor; payload=pair(A_1, B_1); channel=regularCh; fact=iknows(pair(A_1, B_1)); direction=receive)
% @match(entity=Client; iid=E_C_IID; line=23; variable=A; term=A_1)
% @match(entity=Client; iid=E_C_IID; line=23; variable=B; term=B_1)
% @fresh(entity=Client; iid=E_C_IID; line=24; variable=Ephemeral_key; term=Ephemeral_key_1)
% @communication(entity=Client; iid=E_C_IID; line=25; sender=E_C_Actor; receiver=M; payload=pair(crypt(pk(A_1), pair(Amount, pair(E_C_Actor, Ephemeral_key_1))), crypt(pk(B_1), pair(Amount, pair(E_C_Actor, Ephemeral_key_1)))); channel=regularCh; fact=iknows(pair(crypt(pk(A_1), pair(Amount, pair(E_C_Actor, Ephemeral_key_1))), crypt(pk(B_1), pair(Amount, pair(E_C_Actor, Ephemeral_key_1))))); direction=send)
% @step_label(entity=Client; iid=E_C_IID; line=25; variable=SL; term=5)
step step_003_Client__line_23(A, A_1, Amount, B, B_1, E_C_Actor, E_C_IID, End_execution_client, Ephemeral_key, Ephemeral_key_1, M, P, Resp_A, Resp_B) :=
	iknows(pair(A_1, B_1)).
	state_Client(E_C_Actor, E_C_IID, 2, P, M, Amount, Ephemeral_key, Resp_A, Resp_B, End_execution_client, A, B)
	=[exists Ephemeral_key_1]=>
	iknows(pair(crypt(pk(A_1), pair(Amount, pair(E_C_Actor, Ephemeral_key_1))), crypt(pk(B_1), pair(Amount, pair(E_C_Actor, Ephemeral_key_1))))).
	state_Client(E_C_Actor, E_C_IID, 5, P, M, Amount, Ephemeral_key_1, Resp_A, Resp_B, End_execution_client, A_1, B_1)

% @communication(entity=Client; iid=E_C_IID; line=26; sender=M; receiver=E_C_Actor; payload=pair(sign(inv(pk(A)), h(pair(A, pair(Amount, pair(E_C_Actor, Resp_A_1))))), pair(scrypt(Ephemeral_key, Resp_A_1), pair(sign(inv(pk(B)), h(pair(B, pair(Amount, pair(E_C_Actor, Resp_B_1))))), scrypt(Ephemeral_key, Resp_B_1)))); channel=regularCh; fact=iknows(pair(sign(inv(pk(A)), h(pair(A, pair(Amount, pair(E_C_Actor, Resp_A_1))))), pair(scrypt(Ephemeral_key, Resp_A_1), pair(sign(inv(pk(B)), h(pair(B, pair(Amount, pair(E_C_Actor, Resp_B_1))))), scrypt(Ephemeral_key, Resp_B_1))))); direction=receive)
% @match(entity=Client; iid=E_C_IID; line=26; variable=Resp_A; term=Resp_A_2)
% @match(entity=Client; iid=E_C_IID; line=26; variable=Resp_B; term=Resp_B_2)
% @fresh(entity=Client; iid=E_C_IID; line=28; variable=End_execution_client; term=End_execution_client_1)
% @introduce(entity=Client; iid=E_C_IID; line=28; fact=secret(End_execution_client_1, secrecy_End_of_execution, secrecy_End_of_execution_set(IID)))
% @communication(entity=Client; iid=E_C_IID; line=29; sender=E_C_Actor; receiver=P; payload=pair(crypt(pk(P), pair(Amount, pair(E_C_Actor, pair(A, pair(Resp_A_1, pair(B, Resp_B_1)))))), pair(sign(inv(pk(A)), h(pair(A, pair(Amount, pair(E_C_Actor, Resp_A_1))))), pair(sign(inv(pk(B)), h(pair(B, pair(Amount, pair(E_C_Actor, Resp_B_1))))), End_execution_client_1))); channel=regularCh; fact=iknows(pair(crypt(pk(P), pair(Amount, pair(E_C_Actor, pair(A, pair(Resp_A_1, pair(B, Resp_B_1)))))), pair(sign(inv(pk(A)), h(pair(A, pair(Amount, pair(E_C_Actor, Resp_A_1))))), pair(sign(inv(pk(B)), h(pair(B, pair(Amount, pair(E_C_Actor, Resp_B_1))))), End_execution_client_1)))); direction=send)
% @step_label(entity=Client; iid=E_C_IID; line=29; variable=SL; term=8)
step step_004_Client__line_26(A, Amount, B, E_C_Actor, E_C_IID, End_execution_client, End_execution_client_1, Ephemeral_key, IID, M, P, Resp_A, Resp_A_1, Resp_B, Resp_B_1) :=
	child(IID, E_C_IID).
	iknows(pair(sign(inv(pk(A)), h(pair(A, pair(Amount, pair(E_C_Actor, Resp_A_1))))), pair(scrypt(Ephemeral_key, Resp_A_1), pair(sign(inv(pk(B)), h(pair(B, pair(Amount, pair(E_C_Actor, Resp_B_1))))), scrypt(Ephemeral_key, Resp_B_1))))).
	state_Client(E_C_Actor, E_C_IID, 5, P, M, Amount, Ephemeral_key, Resp_A, Resp_B, End_execution_client, A, B).
	not(iknows(Amount)).
	not(iknows(Resp_A)).
	not(iknows(Resp_B))
	=[exists End_execution_client_1]=>
	child(IID, E_C_IID).
	iknows(pair(crypt(pk(P), pair(Amount, pair(E_C_Actor, pair(A, pair(Resp_A_1, pair(B, Resp_B_1)))))), pair(sign(inv(pk(A)), h(pair(A, pair(Amount, pair(E_C_Actor, Resp_A_1))))), pair(sign(inv(pk(B)), h(pair(B, pair(Amount, pair(E_C_Actor, Resp_B_1))))), End_execution_client_1)))).
	secret(End_execution_client_1, secrecy_End_of_execution, secrecy_End_of_execution_set(IID)).
	state_Client(E_C_Actor, E_C_IID, 8, P, M, Amount, Ephemeral_key, Resp_A_1, Resp_B_1, End_execution_client_1, A, B)

% @communication(entity=Clerk; iid=E_C_IID_; line=46; sender=Dummy; receiver=E_C_Actor_; payload=pair(e_request, E_C_M_1); channel=regularCh; fact=iknows(pair(e_request, E_C_M_1)); direction=receive)
% @match(entity=Clerk; iid=E_C_IID_; line=46; variable=M; term=E_C_M_1)
% @communication(entity=Clerk; iid=E_C_IID_; line=47; sender=E_C_Actor_; receiver=E_C_M_1; payload=pair(g(E_C_Actor_), pk(E_C_Actor_)); channel=regularCh; fact=iknows(pair(g(E_C_Actor_), pk(E_C_Actor_))); direction=send)
% @step_label(entity=Clerk; iid=E_C_IID_; line=47; variable=SL; term=3)
step step_005_Clerk__line_46(Client, E_C_Actor_, E_C_Amount, E_C_Ephemeral_key, E_C_IID_, E_C_M, E_C_M_1, Response) :=
	iknows(pair(e_request, E_C_M_1)).
	not(dishonest(E_C_Actor_)).
	not(iknows(g(E_C_Actor_))).
	state_Clerk(E_C_Actor_, E_C_IID_, 1, E_C_M, Client, E_C_Ephemeral_key, E_C_Amount, Response)
	=>
	iknows(pair(g(E_C_Actor_), pk(E_C_Actor_))).
	state_Clerk(E_C_Actor_, E_C_IID_, 3, E_C_M_1, Client, E_C_Ephemeral_key, E_C_Amount, Response)

% @communication(entity=Clerk; iid=E_C_IID_; line=48; sender=E_C_M; receiver=E_C_Actor_; payload=crypt(pk(E_C_Actor_), pair(E_C_Amount_1, pair(Client_1, E_C_Ephemeral_key_1))); channel=regularCh; fact=iknows(crypt(pk(E_C_Actor_), pair(E_C_Amount_1, pair(Client_1, E_C_Ephemeral_key_1)))); direction=receive)
% @match(entity=Clerk; iid=E_C_IID_; line=48; variable=Client; term=Client_1)
% @match(entity=Clerk; iid=E_C_IID_; line=48; variable=Amount; term=E_C_Amount_1)
% @match(entity=Clerk; iid=E_C_IID_; line=48; variable=Ephemeral_key; term=E_C_Ephemeral_key_1)
% @fresh(entity=Clerk; iid=E_C_IID_; line=49; variable=Response; term=Response_1)
% @communication(entity=Clerk; iid=E_C_IID_; line=50; sender=E_C_Actor_; receiver=E_C_M; payload=pair(sign(inv(pk(E_C_Actor_)), h(pair(E_C_Actor_, pair(E_C_Amount_1, pair(Client_1, Response_1))))), scrypt(E_C_Ephemeral_key_1, Response_1)); channel=regularCh; fact=iknows(pair(sign(inv(pk(E_C_Actor_)), h(pair(E_C_Actor_, pair(E_C_Amount_1, pair(Client_1, Response_1))))), scrypt(E_C_Ephemeral_key_1, Response_1))); direction=send)
% @step_label(entity=Clerk; iid=E_C_IID_; line=50; variable=SL; term=6)
step step_006_Clerk__line_48(Client, Client_1, E_C_Actor_, E_C_Amount, E_C_Amount_1, E_C_Ephemeral_key, E_C_Ephemeral_key_1, E_C_IID_, E_C_M, Response, Response_1) :=
	iknows(crypt(pk(E_C_Actor_), pair(E_C_Amount_1, pair(Client_1, E_C_Ephemeral_key_1)))).
	not(iknows(pair(sign(inv(pk(E_C_Actor_)), h(pair(E_C_Actor_, pair(E_C_Amount_1, pair(Client_1, Response_1))))), scrypt(E_C_Ephemeral_key_1, Response_1)))).
	state_Clerk(E_C_Actor_, E_C_IID_, 3, E_C_M, Client, E_C_Ephemeral_key, E_C_Amount, Response)
	=[exists Response_1]=>
	iknows(pair(sign(inv(pk(E_C_Actor_)), h(pair(E_C_Actor_, pair(E_C_Amount_1, pair(Client_1, Response_1))))), scrypt(E_C_Ephemeral_key_1, Response_1))).
	state_Clerk(E_C_Actor_, E_C_IID_, 6, E_C_M, Client_1, E_C_Ephemeral_key_1, E_C_Amount_1, Response_1)

% @communication(entity=Mediator; iid=E_M_IID; line=56; sender=E_M_Actor; receiver=E_M_M; payload=pair(scrypt(g(bob), g(charlie)), pair(scrypt(g(charlie), g(bob)), pair(scrypt(g(bob), g(client)), scrypt(g(client), g(bob))))); channel=regularCh; fact=iknows(pair(scrypt(g(bob), g(charlie)), pair(scrypt(g(charlie), g(bob)), pair(scrypt(g(bob), g(client)), scrypt(g(client), g(bob)))))); direction=send)


section goals:

% @goal(name=secrecy_End_of_execution; line=69; Knowers=Knowers; Msg=Msg)
attack_state secrecy_End_of_execution(Knowers, Msg) :=
	iknows(Msg).
	not(contains(i, Knowers)).
	secret(Msg, secrecy_End_of_execution, Knowers)

% @goal(name=secret_End_of_execution; line=35; Knowers=E_sEoe_Knowers; Msg=E_sEoe_Msg)
attack_state secret_End_of_execution(E_sEoe_Knowers, E_sEoe_Msg) :=
	iknows(E_sEoe_Msg).
	not(contains(i, E_sEoe_Knowers)).
	secret(E_sEoe_Msg, secret_End_of_execution, E_sEoe_Knowers)

\end{lstlisting}

\section{Result of the orchestration problem}

\begin{lstlisting}[breaklines]
INPUT:
	/home/ychevali/Avantssar/spec/aslan/NegativeConstraints.aslan

SUMMARY:
	INCONCLUSIVE

DETAILS:
	NOT_SUPPORTED

BACKEND:
	CL-ATSE 2.5-8_(February_23th_2011)

COMMENTS:
	Failure while reading : sorry; no not<iknows<Amount<5><3622>:text>> allowed in a rule's left-hand side. 

STATISTICS:
	TIME 12 ms
	TESTED 0 transition
	REACHED 0 state
	READING 0.01 seconds
	ANALYSE 0.00 seconds

UNUSED:
	step_001_Environment__line_61
	step_002_Client__line_22
	step_003_Client__line_23

MESSAGES:

\end{lstlisting}
% \newpage

\end{document}